\newcolumntype{C}[1]{>{\centering}m{#1}}
\newtheorem{theorem}{Theorem}
\newtheorem{lemma}[theorem]{Lemma}
\newtheorem{corollary}[theorem]{Corollary}
\newtheorem{proposition}[theorem]{Proposition}
\newtheorem{definition}[theorem]{Definition}
\newtheorem{example}[theorem]{Example}
\newtheorem{remarks}[theorem]{Remark}
\DeclareMathOperator{\ini}{ini}
\DeclareMathOperator{\lt}{lt}
\DeclareMathOperator{\lv}{lv}
\DeclareMathOperator{\prem}{prem}
\DeclareMathOperator{\sat}{sat}
\DeclareMathOperator{\wcs}{wcs}
\newcommand{\field}[1]{\mathbb{#1}}
\newcommand{\fk}{\field{K}}
\newcommand{\pset}[1]{\mathcal{#1}}
\newcommand{\ideal}[1]{\mathfrak{#1}}
\newcommand{\bases}[1]{\langle #1 \rangle}
\newcommand{\rad}[1]{\sqrt{#1}}
\DeclareMathOperator{\zero}{\mathsf{Z}}
\newcommand{\src}{src}
\newcommand{\grobner}{Gr\"{o}bner }
\newcommand{\p}[1]{\bm{#1}}
\newcommand{\kx}{\field{K}[\p{x}]}
\DeclareMathOperator{\nform}{nform}
\DeclareMathOperator{\algnorr}{\sf srcDec}
\DeclareMathOperator{\algnot}{\sf srcDivisor}
\DeclareMathOperator{\algt}{\sf srcPair}
\newcommand{\point}[1]{\bm{#1}}
\begin{document}

\begin{frontmatter}

\title{Computing strong regular characteristic pairs\\ with \grobner bases\tnoteref{t1}}
\tnotetext[t1]{This work was partially supported by the National Natural Science Foundation of China (NSFC 11771034).}

\author[1]{Rina Dong\corref{cor1}}
\ead{rina.dong@buaa.edu.cn}
\author[1,2]{Dongming Wang}
\ead{dongming.wang@lip6.fr}
\cortext[cor1]{Corresponding author: rina.dong@buaa.edu.cn, School of Mathematical Sciences, Beihang University, Beijing 100191, China.}
\address[1]{Beijing Advanced Innovation Center for Big Data and Brain Computing -- LMIB -- School of Mathematical Sciences,
	Beihang University, Beijing 100191, China}
\address[2]{Centre National de la Recherche Scientifique,
	75794 Paris cedex 16, France}

%

\begin{abstract}
The W-characteristic set of a polynomial ideal is the minimal triangular set contained in the reduced lexicographical \grobner basis of the ideal. A pair $(\pset{G},\pset{C})$ of polynomial sets is a strong regular characteristic pair if $\pset{G}$ is a reduced lexicographical \grobner basis, $\pset{C}$ is the W-characteristic set of the ideal $\bases{\pset{G}}$, the saturated ideal $\sat(\pset{C})$ of $\pset{C}$ is equal to $\bases{\pset{G}}$, and $\pset{C}$ is regular. In this paper, we show that
for any polynomial ideal $\ideal{I}$ with given generators one can either detect that $\ideal{I}$ is unit, or construct a strong regular characteristic pair $(\pset{G},\pset{C})$ by computing \grobner bases such that $\ideal{I}\subseteq \sat(\pset{C})=\bases{\pset{G}}$ and $\sat(\pset{C})$ divides $\ideal{I}$, so the ideal $\ideal{I}$ can be split into the saturated ideal $\sat(\pset{C})$ and the quotient ideal $\ideal{I}:\sat(\pset{C})$. Based on this strategy of splitting by means of quotient and with \grobner basis and ideal computations, we devise a simple algorithm to decompose an arbitrary polynomial set $\pset{F}$ into finitely many strong regular characteristic pairs, from which two representations for the zeros of $\pset{F}$ are obtained: one in terms of strong regular \grobner bases and the other in terms of regular triangular sets. We present some properties about strong regular characteristic pairs and characteristic decomposition and illustrate the proposed algorithm and its performance by examples and experimental results.
\end{abstract}

\begin{keyword}
Strong regular \sep characteristic decomposition \sep  W-characteristic set \sep \grobner basis \sep ideal computation
\end{keyword}
\end{frontmatter}

\section{Introduction}
\label{sec:intro}
Triangular sets \cite{r50d,w86b,k93g,W2001E} and \grobner bases \cite{B1965A,B85G,B93G,CLO1997I} are special kinds of well-structured sets of multivariate polynomials that can be used to represent and to study zeros of arbitrary polynomial sets and ideals. A large variety of problems in commutative algebra and algebraic geometry \cite{CLO1997I,cox98} may readily be solved by transforming the involved sets of polynomials into triangular sets or \grobner bases. It is widely known that the theories and methods of triangular sets are different from those of \grobner bases conceptually and operationally. The questions that have motivated our work here and in \cite{w2016o,WDM20} are what inherent relationship there may exist between triangular sets and \grobner bases and how to connect or combine the two algorithmic approaches to amplify their applicability and power. These questions have been touched in \cite{L85i,MM2003r,D2012o,DL92} for bivariate, zero-dimensional, and other kinds of special polynomial ideals. For general polynomial ideals of arbitrary dimension, it is shown in \cite{a99t,w2016o} that intrinsical connections between Ritt characteristic sets and lexicographical \grobner bases exist and the concept of W-characteristic sets plays an essential role in exploiting such connections. Apart from the lack of investigation on the connection aspect, the literature on triangular sets and \grobner bases is extremely rich (see \cite{a99t,B93G,B85G,Rob12,c07c,CM2012a,D2012o,FGLM93E,GVW2016,Gao1993On,k93g,L85i,li08a,LI2010D,MON02,m00t,MW19,M2012d,SY1996l,w00c,w2016o,WDM20} and references therein).

Multivariate polynomials in a triangular set may be ordered strictly according to their leading variables, with respect to a fixed variable ordering, so the number of the polynomials cannot be bigger than that of the variables in any triangular set. On the other hand, \grobner bases are defined with respect to a fixed term order determined by the variable ordering and the number of elements in a \grobner basis can be arbitrarily large. For any polynomial ideal with given set $\pset{F}$ of generators, one can compute, by using any of the available algorithms, a \grobner basis that generates the same ideal as $\pset{F}$. To represent the zeros of $\pset{F}$ using triangular sets, in general one needs more than one triangular set, so decomposition takes place. When a triangular set $\pset{T}$ is of concern, the leading coefficients of the polynomials in $\pset{T}$ with respect to their leading variables, called the initials of the polynomials in $\pset{T}$, play a fundamental role. The saturated ideal of $\pset{T}$ by the product of the initials of the polynomials in $\pset{T}$ is simply called the saturated ideal of $\pset{T}$ and denoted as $\sat(\pset{T})$. Its radical is the largest ideal whose zero set contains the set of those zeros of $\pset{T}$ which are not zeros of any of the initials.

Triangular sets may be ordered according to the ranks (leading variables and degrees) and then the leading terms of their polynomials. Let the minimal triangular set contained in the reduced lexicographical (lex) \grobner basis of a polynomial ideal (or trivially $[1]$ if the ideal is unit) be called the W-characteristic set of the ideal. By strong regular \grobner basis, we mean a reduced lex \grobner basis $\pset{G}$ such that the W-characteristic set $\pset{C}$ of the ideal $\bases{\pset{G}}$ is regular and $\sat(\pset{C})=\bases{\pset{G}}$; we call $(\pset{G},\pset{C})$ a strong regular characteristic pair, or an \src~pair for short. The pair $(\pset{G},\pset{C})$ is an interesting object of study because the strong regular \grobner basis $\pset{G}$ and the regular triangular set $\pset{C}$ therein not only have remarkable properties but also provide two different yet correlated representations for the zeros of the ideal $\bases{\pset{G}}$.

What interests us most is algorithmic decomposition of arbitrary polynomial sets into strong regular \grobner bases, or equivalently into \src~pairs, for which we have the following general approach. From any polynomial set $\pset{F}$, one can compute finitely many regular sets $\pset{T}_1,\ldots,\pset{T}_e$ (also called regular chains; \cite{K1991M}) such that
\begin{equation}\label{eq:kalk}
	\rad{\bases{\pset{F}}}= \rad{\sat(\pset{T}_1)}\cap \cdots\cap \rad{\sat(\pset{T}_e)}.
\end{equation}
There are two families of algorithms for such regular triangular decomposition. One family of algorithms was proposed initially by Kalkbrener \cite{k93g}, and developed further by Moreno Maza and coauthors \cite{m00t,ACM2014d,c07c,CM2012a} with algorithmic techniques from the method of Lazard \cite{l91n}. These algorithms were designed mainly for computing regular triangular representations of the form \eqref{eq:kalk}. The other family of algorithms was proposed by the second author \cite{W2001E,w00c} to compute regular zero decompositions of the form
\begin{equation}\label{eq:wang}
	\zero(\pset{F})=\zero(\pset{T}_1/ J_1)\cup \cdots \cup \zero(\pset{T}_e/ J_e), 	
\end{equation}
where each $J_i$ is the product of the initials of the polynomials in the regular set $\pset{T}_i$ and $\zero(\pset{T}_i/J_i)$ denotes the set of all common zeros of the polynomials in $\pset{T}_i$ which do not make $J_i$ vanish.
It is easy to see that decomposition \eqref{eq:wang} implies representation \eqref{eq:kalk}. However, representation \eqref{eq:kalk} does not necessarily lead to decomposition \eqref{eq:wang} and the generators of the saturated ideals $\sat(\pset{T}_i)$ in \eqref{eq:kalk} are not explicitly provided. Nevertheless, for each $\sat(\pset{T}_i)$ a \grobner basis can be computed straightforwardly from $\pset{T}_i$.

Another alternative approach proposed recently by Mou and the authors \cite{WDM20} permits one to decompose an arbitrary polynomial set into normal or regular characteristic pairs directly. This approach is based on a structure theorem about irregular W-characteristic sets \cite{w2016o} for the splitting of ideals and relies strongly upon \grobner basis computation; it is independent of pseudo-division-based triangular decomposition.

In this paper, we show that
from any polynomial ideal $\ideal{I}$ with given generators one can either detect that $\ideal{I}$ is unit, or construct an \src~pair $(\pset{G},\pset{C})$ such that $\ideal{I}\subseteq \sat(\pset{C})$ and $\sat(\pset{C})$ divides $\ideal{I}$ by computing \grobner bases. After $(\pset{G},\pset{C})$ (called an \src~divisor of $\ideal{I}$) is constructed, one can divide the saturated ideal $\sat(\pset{C})$ out of $\ideal{I}$ to obtain an ideal $\ideal{J}$
by taking ideal quotient: $\ideal{J}=\ideal{I}:\sat(\pset{C})$. Then the radical ideal of $\ideal{I}$ is decomposed as the intersection of the radical ideals of $\sat(\pset{C})$ and $\ideal{J}$. Two key techniques are used in search for an \src~divisor of $\ideal{I}$: one is to produce an \src~pair by means of computing $\sat(\wcs(\cdots\sat(\wcs(\ideal{I}))))$, where $\wcs(\ideal{I})$ denotes the W-characteristic set of $\ideal{I}$. The other is to use the relation $\sat(\pset{C}^*)\neq\ideal{I}$, when the produced \src~pair is not an \src~divisor of $\ideal{I}$,
to determine a polynomial $F\in\sat(\pset{C}^*)\setminus\ideal{I}$ with $\pset{C}^*=\wcs(\ideal{I})$. In this case,
there exists a positive integer $q$ such that $H=F[\prod_{C\in \pset{C}^*}\ini(C)]^q\in\bases{\pset{C}^*}\subseteq\ideal{I}$ and an \src~divisor of $\ideal{I}$ may be found among the \src~pairs computed from $\bases{\ideal{I}\cup\{H_i\}}$ for $H_i\not\in\ideal{I}$ and $H_i\mid H$. Once $\ideal{J}$ is obtained, one can iterate the find-and-divide process with $\ideal{J}$ instead of $\ideal{I}$.
Based on this strategy of splitting by means of quotient and with \grobner basis and ideal computations, we devise a simple algorithm to decompose an arbitrary polynomial set $\pset{F}$ into finitely many \src~pairs $(\pset{G}_1,\pset{C}_1),\ldots,(\pset{G}_e,\pset{C}_e)$ such that
\begin{equation}\label{eq:chardec}
	\rad{\bases{\pset{F}}}=\rad{\sat(\pset{C}_1)}\cap \cdots \cap \rad{\sat(\pset{C}_e)}= \rad{\bases{\pset{G}_1}}\cap \cdots \cap \rad{\bases{\pset{G}_e}}.
\end{equation}

In the above decomposition computed by the ideal-division-based algorithm, each $\pset{G}_i$ is actually the reduced lex \grobner basis of a saturated ideal $\sat(\pset{T}_i)$, where $\pset{T}_i$ is the W-characteristic set of a certain ideal that contains $\bases{\pset{F}}$. The regular set $\pset{C}_i$ is the W-characteristic set of $\sat(\pset{T}_i)$, obtained from $\pset{G}_i$ as a by-product for free. Moreover, regular sets computed by our algorithm are normal in most cases and they are often simpler
 than the corresponding regular sets computed by pseudo-division\footnote{Most of the available methods for triangular decomposition use pseudo-division to eliminate variables. In doing pseudo-division, the dividend has to be multiplied by some power of the initial of the dividing polynomial. Repeated multiplication of the powers of initials necessarily creates extraneous factors, making the sizes of intermediate polynomials increase rapidly. It happens often that decomposition cannot continue after a few successive pseudo-divisions because polynomials in the pseudo-reminder sequence become larger and larger.} or subresultant-based algorithms, because the former are minimal triangular sets taken from lex \grobner bases. In general, the generating sets of the saturated ideals of normal triangular sets are much easier to compute than those of abnormal ones. More importantly, our algorithm generates few redundant components, so the decomposition process usually terminates in a few iterations. The effectiveness of the decomposition algorithm has been demonstrated by our experimental results.

The rest of the paper is organized as follows. After a brief introduction to \grobner bases, triangular decomposition, and characteristic decomposition in Section \ref{sec:pre}, we show in Section \ref{sec:satdec} how to construct
from an arbitrary ideal $\ideal{I}$ an \src~pair $(\pset{G},\pset{C})$ such that $\ideal{I}\subseteq \sat(\pset{C})=\bases{\pset{G}}$. In Section~\ref{sec:algsatdec}, it is explained how an \src~divisor of $\ideal{I}$ can be produced from any ideal $\ideal{I}\subsetneq\kx$ and the ideal-division-based algorithm for strong regular characteristic decomposition is described. The algorithm and its performance are illustrated by examples and experimental results with a preliminary implementation of the algorithm in Section~\ref{sec:exex}. The paper contains a summary of contributions in Section~\ref{sec:con} and some remarks on special cases in Appendix A.

\section{Preliminaries}
\label{sec:pre}
In this section we recall some basic notions which will be used in the following sections. For those notions which are not formally introduced in the paper, the reader may consult the references \cite{a99t,W2001E,B93G,CLO1997I}. \vspace{-3mm}
\subsection{Triangular sets, triangular decompositions, and \grobner bases}
Let $\fk$ be a field of characteristic 0 and $\fk[x_1, \ldots, x_n]$ be the ring of polynomials in $n$ ordered variables $x_1<\cdots<x_n$ with coefficients in $\fk$. Throughout the paper, we write $\point{x}$ for $(x_1,\ldots,x_n)$. Let $F$ be a polynomial in $\kx\setminus\fk$. With respect to the variable ordering, the greatest variable appearing in $F$ is called the \emph{leading variable} of $F$ and denoted as $\lv(F)$. Assume that $\lv(F) = x_i$; then $F$ can be written as $F = Ix_i^k + R$, where $I\in \fk[x_1, \ldots, x_{i-1}]$, $R
\in \fk[x_1, \ldots, x_i]$, and $\deg(R, x_i) < k=\deg(F, x_i)$ (the degree of $F$ in $x_i$). The polynomial
$I$ is called the \emph{initial} of $F$, denoted as $\ini(F)$. For any polynomial set $\pset{F}\subseteq \kx$, $\ini(\pset{F})$ stands for $\{\ini(F): F\in \pset{F}\}$.

\begin{definition}
	\rm A finite, nonempty, ordered set $[T_1, \ldots, T_r]$ of polynomials in $\kx\setminus\fk$ is called a \emph{triangular set} if $\lv(T_1) < \cdots < \lv(T_r)$.
\end{definition}
We denote by $\prem(P,Q)$ the pseudo-reminder of $P\in \kx$ with respect to $Q\in \kx\setminus \mathbb{K}$ in $\lv(Q)$. Let $\pset{T}=[T_1,\ldots,T_r]\subseteq \kx$ be any triangular set; the pseudo-reminder of $P$ with respect to $\pset{T}$ is defined as
$$\prem(P,\pset{T}) := \prem(\cdots\prem(\prem(P,T_r),T_{r-1}),\ldots,T_1).$$
The variables in $\{x_1,\ldots,x_n\}\setminus \{\lv(T_1),\ldots,\lv(T_r)\}$ are called the \emph{parameters} of $\pset{T}$.
For any two polynomial sets $\pset{F}, \pset{G} \subset \kx$, define
$$\zero(\mathcal{F}/\mathcal{G}) := \{\bar{\point{x}} \in \bar{\mathbb{K}}^{n}:\, F(\bar{\point{x}}) = 0,~G(\bar{\point{x}})\neq 0, \text{ for~all~} F\in \pset{F}, G \in \pset{G}\},$$
where $\bar{\fk}$ is the algebraic closure of $\fk$. Sometimes we write  $\zero(\mathcal{F}/\prod_{G\in\mathcal{G}}G)$ for $\zero(\mathcal{F}/\mathcal{G})$ and write $\zero(\pset{F})$ for $\zero(\pset{F}/\emptyset)$.

Let $\pset{F}\subseteq \kx$ be any polynomial set and denote by $\bases{\pset{F}}$ the ideal generated by $\pset{F}$ in $\kx$ and by $\rad{\bases{\pset{F}}}$ the radical of $\bases{\pset{F}}$. For any $\pset{P}\subseteq \kx$, $\bases{\pset{F}}:\bases{\pset{P}}$ denotes the ideal quotient of $\bases{\pset{F}}$ by $\bases{\pset{P}}$. The \emph{saturated ideal} of a triangular set $\pset{T} = [T_1, \ldots, T_r]$ is defined as $\sat(\pset{T}) := \{P\in\kx:\, \exists~i \text{ such that }PJ^i\in \bases{\pset{T}}\}$, where $J=\ini(T_1)\cdots \ini(T_r)$. For any $c\in \mathbb{K}\setminus\{0\}$, we consider $[c]$ also as a triangular set, which is trivial, and define $\sat([c])=\bases{1}$.


\begin{definition}\rm
	Let $\pset{T}=[T_1,\ldots,T_r]$ be any nontrivial triangular set in $\kx$. $\pset{T}$ is said to be \emph{regular}, or called a \emph{regular set} or a \emph{regular chain}, if $\ini(T_i)$ is neither zero nor a zero-divisor in $\kx /
	\sat(\pset{T}_{i-1})$ for all $i=1,\ldots,r$.
\end{definition}
Regular sets or chains \cite{w00c,a99t} are special triangular sets with nice properties which have been extensively studied. In particular, it is proved in \cite{a99t,W2001E} that a triangular set $\pset{T}$ is regular if and only if $\sat(\pset{T})=\{P\in \kx:\, \prem(P,\pset{T})=0\}$. The triangular set $\pset{T}$ is called a \emph{normal set} (or said to be \emph{normal}) if $\ini(\pset{T})$ does not involve any of the leading variables of the polynomials in $\pset{T}$. Obviously, any normal set is regular, while a regular set is not necessarily normal.
%

For a given term order $<$, the greatest term in a polynomial $F\in \kx$ with respect to $<$ is called the \emph{leading term} of $F$ and denoted as $\lt(F)$. In this paper, we are concerned only with $<_{\rm lex}$, the lex term order.

\begin{definition}\label{def:gb}\rm
	Let $\ideal{I}\subseteq \kx$ be an ideal, $<$ be a term order, and $\bases{\lt(\ideal{I})}$ stand for the ideal generated by the leading terms of all the polynomials in $\ideal{I}$. A finite
	set $\{G_1, \ldots, G_s\}\subseteq \mathfrak{I}$ is called a \emph{\grobner
		basis} of $\mathfrak{I}$ with respect to $<$ if $\bases{\lt(G_1), \ldots, \lt(G_s)} = \bases{\lt(\ideal{I})}$.
\end{definition}

Let $\pset{G}=\{G_1,\ldots,G_s\}$ be a \grobner basis of an ideal $\ideal{I}\subseteq \kx$ with respect to a fixed term order $<$. For any polynomial $F\in \kx$, there exists a unique polynomial $R\in \kx$, called the \emph{normal form} of $F$ with respect to $\pset{G}$ and denoted as $\nform(F,\pset{G})$, such that $F-R\in \ideal{I}$ and no term of $R$ is divisible by any of $\lt(G_1),\ldots,\lt(G_s)$. If $F=R$, then $F$ is said to be \emph{B-reduced} with respect to $\pset{G}$.

\begin{definition}\rm
	A \grobner basis $\{G_1,\ldots, G_s\}$ is said to be \emph{reduced} if every $G_i$ is monic and no term of $G_i$ is divisible by any $\lt(G_j)$ for all $j\neq i$ and $i,j=1,\ldots, s$.
\end{definition}

In the rest of this paper, the variable ordering will be fixed and all \grobner bases mentioned are meant reduced lex \grobner bases.

\subsection{W-characteristic sets, characteristic pairs, and characteristic  decompositions}
For any polynomial ideal, one can compute its unique reduced lex \grobner basis and from the \grobner basis, one can extract a minimal triangular set. This special triangular set, defined formally as the W-characteristic set of the ideal, possesses remarkable properties and plays a key role in our work on \src~pairs.

\begin{definition}[{{\cite[Def.~3.1]{w2016o}}}]\label{def:wchar}\rm
	Let $\pset{F}$ be a polynomial set in $\kx$, $\pset{G}$ be the \grobner basis of $\bases{\pset{F}}$, $\Theta_0=\pset{G}\cap\field{K}$, $\pset{G}^{\bases{i}}=\{G\in \pset{G}:\, \lv(G) = x_i\}$,  $\Theta_{i}$ be the set consisting of the smallest polynomial in $\pset{G}^{\bases{i}}$ if $\pset{G}^{\bases{i}}\neq \emptyset$, or $\emptyset$ otherwise ($1 \leq i\leq n$).  The set $\Theta_0\cup\Theta_1\cup \cdots \cup \Theta_n$ of polynomials, ordered with increasing leading variables when $\pset{G}\neq\{1\}$, is called the \emph{W-characteristic set} of $\bases{\pset{F}}$.
\end{definition}

\begin{proposition}
	Let $\ideal{I}$ be any ideal in $\kx$ with $\ideal{I}^{(0)}=\ideal{I}\cap\field{K}=\emptyset$, $\ideal{I}^{(i)}=\ideal{I}\cap\field{K}[x_1,\ldots,x_i]$, $\ideal{I}^{\bases{i}}=\{F\in \ideal{I}^{(i)}\setminus \ideal{I}^{(i-1)}:F \text{ is monic and B-reduced with respect to all }P\in\ideal{I}^{(i-1)}\}$, and $\Omega_{i}$ be the set consisting of the smallest polynomial in $\ideal{I}^{\bases{i}}$ with respect to the lex term order if $\ideal{I}^{\bases{i}}\neq \emptyset$, or $\emptyset$ otherwise $(1 \leq i\leq n)$. Then the set $\Omega_1\cup \cdots \cup \Omega_n$ of polynomials, ordered with increasing leading variables, is the W-characteristic set of $\ideal{I}$.
\end{proposition}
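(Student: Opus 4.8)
The plan is to show that the intrinsically defined set $\Omega_1\cup\cdots\cup\Omega_n$ coincides, summand by summand, with the W-characteristic set $\Theta_0\cup\cdots\cup\Theta_n$ of Definition~\ref{def:wchar}. First I would fix the reduced lex \grobner basis $\pset{G}$ of $\ideal{I}$ and invoke the Elimination Theorem: for each $i$, the set $\pset{G}^{(i)}:=\pset{G}\cap\field{K}[x_1,\ldots,x_i]$ is the reduced lex \grobner basis of $\ideal{I}^{(i)}$, and since a polynomial lies in $\field{K}[x_1,\ldots,x_i]\setminus\field{K}[x_1,\ldots,x_{i-1}]$ exactly when its leading variable is $x_i$, we have $\pset{G}^{\bases{i}}=\pset{G}^{(i)}\setminus\pset{G}^{(i-1)}$ in the notation of Definition~\ref{def:wchar}. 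Because $\ideal{I}^{(0)}=\emptyset$ forces $1\notin\ideal{I}$, the basis $\pset{G}$ contains no constant and $\Theta_0=\emptyset$, so it suffices to prove $\Omega_i=\Theta_i$ for every $i$, where by ``smallest with respect to lex'' I mean smallest leading term.

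A preliminary observation to record is that ``B-reduced with respect to all $P\in\ideal{I}^{(i-1)}$'' is equivalent to ``B-reduced with respect to $\pset{G}^{(i-1)}$'', since $\bases{\lt(\ideal{I}^{(i-1)})}=\bases{\lt(G'):G'\in\pset{G}^{(i-1)}}$ and a monomial fails to be divisible by every $\lt(P)$, $P\in\ideal{I}^{(i-1)}$, precisely when it does not lie in this monomial ideal. With this, one containment is immediate: any $G\in\pset{G}^{\bases{i}}$ is monic and, by reducedness of $\pset{G}$, has no term divisible by $\lt(G')$ for $G'\in\pset{G}^{(i-1)}\subseteq\pset{G}\setminus\{G\}$; as moreover $G\in\ideal{I}^{(i)}\setminus\ideal{I}^{(i-1)}$, we get $\pset{G}^{\bases{i}}\subseteq\ideal{I}^{\bases{i}}$. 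The reverse emptiness direction follows by contraposition: if $\pset{G}^{\bases{i}}=\emptyset$ then $\pset{G}^{(i)}=\pset{G}^{(i-1)}\subseteq\field{K}[x_1,\ldots,x_{i-1}]$, whence $\ideal{I}^{(i)}=\ideal{I}^{(i-1)}$ and $\ideal{I}^{\bases{i}}=\emptyset$. Thus $\pset{G}^{\bases{i}}=\emptyset\Leftrightarrow\ideal{I}^{\bases{i}}=\emptyset$, which settles the case where $\Omega_i$ and $\Theta_i$ are both empty.

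It remains to treat the nonempty case and match the minimal elements. Write $G_i$ for the smallest element of $\pset{G}^{\bases{i}}$, so $\Theta_i=\{G_i\}$. For any $F\in\ideal{I}^{\bases{i}}$, since $\lt(F)\in\bases{\lt(\ideal{I}^{(i)})}$ there is $G\in\pset{G}^{(i)}$ with $\lt(G)\mid\lt(F)$; the B-reducedness of $F$ with respect to $\ideal{I}^{(i-1)}$ excludes $G\in\pset{G}^{(i-1)}$, so $G\in\pset{G}^{\bases{i}}$ and hence $\lt(G_i)\preceq\lt(G)\preceq\lt(F)$, showing $\lt(G_i)$ is minimal over $\ideal{I}^{\bases{i}}$. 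For uniqueness, if $F,F'\in\ideal{I}^{\bases{i}}$ both attain this minimal leading term, then $F-F'$ is again B-reduced with respect to $\ideal{I}^{(i-1)}$ (each of its monomials already occurs in $F$ or $F'$) and satisfies $\lt(F-F')\prec\lt(G_i)$; were $F\neq F'$, normalizing $F-F'$ would yield an element of $\ideal{I}^{\bases{i}}$ with leading term strictly below the established minimum, a contradiction. Hence the smallest element of $\ideal{I}^{\bases{i}}$ is exactly $G_i$, giving $\Omega_i=\{G_i\}=\Theta_i$ and the result. I expect the delicate point to be precisely this last step, namely arguing that the minimum of the infinite set $\ideal{I}^{\bases{i}}$ is attained by and equal to the single \grobner-basis element $G_i$, together with the bookkeeping that makes ``B-reduced with respect to $\ideal{I}^{(i-1)}$'' interchangeable with ``B-reduced with respect to $\pset{G}^{(i-1)}$''.
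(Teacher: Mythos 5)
Your proof takes essentially the same route as the paper's: fix the reduced lex \grobner basis $\pset{G}$, use the Elimination Theorem to see that $\pset{G}^{(i)}$ is the reduced \grobner basis of $\ideal{I}^{(i)}$, deduce $\pset{G}^{\bases{i}}\subseteq\ideal{I}^{\bases{i}}$, settle the empty case, and then match the minimal elements by comparing leading terms and applying a subtraction/B-reducedness argument; the paper merely organizes the nonempty case slightly differently (it starts from the minimal $F_i\in\ideal{I}^{\bases{i}}$ and first rules out $\lt(F_i)<\lt(G_i)$ by a B-reducedness contradiction). Your preliminary observation making ``B-reduced with respect to all $P\in\ideal{I}^{(i-1)}$'' interchangeable with ``B-reduced with respect to $\pset{G}^{(i-1)}$'' is correct and is exactly what the paper uses implicitly.

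However, one intermediate assertion is false as written: from $\pset{G}^{(i)}=\pset{G}^{(i-1)}$ you infer ``$\ideal{I}^{(i)}=\ideal{I}^{(i-1)}$''. These two ideals are never equal as sets unless both are $\{0\}$: if $0\neq P\in\ideal{I}^{(i-1)}$, then $x_iP\in\ideal{I}^{(i)}$ but $x_iP\notin\field{K}[x_1,\ldots,x_{i-1}]$. Fortunately the conclusion you want, namely $\ideal{I}^{\bases{i}}=\emptyset$, does not need this claim: by your own preliminary observation, any nonzero $F\in\ideal{I}^{(i)}$ has $\lt(F)$ divisible by $\lt(G')$ for some $G'\in\pset{G}^{(i)}=\pset{G}^{(i-1)}\subseteq\ideal{I}^{(i-1)}$, so $F$ fails to be B-reduced with respect to $\ideal{I}^{(i-1)}$, and hence no element of $\ideal{I}^{(i)}\setminus\ideal{I}^{(i-1)}$ can belong to $\ideal{I}^{\bases{i}}$; this is precisely the contradiction argument the paper runs. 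A second, smaller elision: in your uniqueness step, placing the normalized $F-F'$ in $\ideal{I}^{\bases{i}}$ also requires $F-F'\notin\ideal{I}^{(i-1)}$; this is automatic, since a nonzero polynomial in $\ideal{I}^{(i-1)}$ is never B-reduced with respect to all of $\ideal{I}^{(i-1)}$ (its leading term divides itself). With these two repairs your argument is complete and coincides with the paper's proof.
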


\begin{proof}
Let $\pset{G}$ be the reduced lexicographical \grobner basis of $\ideal{I}$. 
	We need to show that $\Omega_1\cup \cdots \cup \Omega_n$ defined in the proposition is the minimal triangular set contained in $\pset{G}$. Note that $\pset{G}^{\bases{i}}=\pset{G}^{(i)}\setminus\pset{G}^{(i-1)}\subseteq \ideal{I}^{\bases{i}}$ for $1 \leq i\leq n$,  where $\pset{G}^{(0)}=\emptyset$ and
	$\pset{G}^{(i)}=\pset{G}\cap\field{K}[x_1,\ldots,x_i]$.
	For each $i$, if $\Omega_i=\emptyset$, then $\ideal{I}^{\bases{i}}=\emptyset$, so $\pset{G}^{\bases{i}}=\emptyset$.
	
	Now suppose that $\Omega_i\neq\emptyset$ for some $i$ and let $\Omega_i=\{F_i\}$. Then $F_i$ is B-reduced with respect to all the polynomials in $\ideal{I}^{(i-1)}\supseteq \pset{G}^{(i-1)}$. If $\pset{G}^{\bases{i}}=\emptyset$, then $F_i\neq 0$ is B-reduced with respect to $\pset{G}^{(i)}$. This leads to a contradiction because $F_i\in\ideal{I}^{(i)}$.
	Hence $\pset{G}^{\bases{i}}\neq\emptyset$.
	Let  $G_i$ be the smallest polynomial in $\pset{G}^{\bases{i}}$. Then
	$\lt(F_i)\leq\lt(G_i)$ with respect to the lexicographical term order.
	If $\lt(F_i)<\lt(G_i)$, then $F_i$ is B-reduced with respect to $\pset{G}^{(i)}$. This contradicts with the fact that $F_i\in\ideal{I}^{(i)}$. Therefore $\lt(F_i)=\lt(G_i)>\lt(F_i-G_i)$, for $F_i$ and $G_i$ are monic. Since $F_i$ is the smallest polynomial in $\ideal{I}^{\bases{i}}$, $F_i-G_i\in\ideal{I}^{(i-1)}$. As $F_i$ and $G_i$ are both  B-reduced with respect to $\pset{G}^{(i-1)}$, so is $F_i-G_i$. It follows that $F_i-G_i\equiv 0$, and thus $F_i=G_i$.
\end{proof}

The ordered set $\Omega_1\cup \cdots \cup \Omega_n$ in the above proposition equals the W-characteristic set of $\ideal{I}$, so it can be taken as an equivalent definition for the W-characteristic set of any nontrivial polynomial ideal. This alternative definition, which is independent of \grobner basis and is worked out in response to a comment of Mingsheng Wang, does not indicate how to construct the W-characteristic set of the ideal.

\begin{proposition}[{{\cite[Prop.\ 3.1]{w2016o}}}]\label{prop:zero}
	Let $\pset{F}$ be a polynomial set in $\kx$ and $\pset{C}$ be the W-characteristic set of $\bases{\pset{F}} \subseteq \kx$. Then:
	\begin{enumerate}
		\item[$(a)$] for any $F \in \bases{\pset{F}}$, $\prem(F, \pset{C}) = 0$;
		\item[$(b)$] $\bases{\pset{C}} \subseteq \bases{\pset{F}} \subseteq \sat(\pset{C})$;
		\item[$(c)$] $\zero(\pset{C} / \ini(\pset{C})) \subseteq \zero(\pset{F}) \subseteq \zero(\pset{C})$.
	\end{enumerate}
\end{proposition}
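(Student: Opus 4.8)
The plan is to observe that parts $(b)$ and $(c)$ are essentially formal consequences of part $(a)$ together with the defining identity of the pseudo-remainder, so the heart of the matter is part $(a)$. Throughout, let $\pset{G}$ denote the reduced lex \grobner basis of $\bases{\pset{F}}$ and $\pset{C}=[C_1,\ldots,C_r]$ its W-characteristic set, so that $\pset{C}\subseteq\pset{G}\subseteq\bases{\pset{F}}$ and, for each $i$, the polynomial $C_i$ is the smallest element $\Theta_k$ of $\pset{G}^{\bases{k}}$ with $x_k=\lv(C_i)$. (The degenerate case $\bases{\pset{F}}=\kx$, where $\pset{G}=\{1\}$, is trivial and may be excluded at the outset.)

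For part $(a)$ I would set $R=\prem(F,\pset{C})$ and establish two facts. First, $R\in\bases{\pset{F}}$: each pseudo-division step replaces the current polynomial $P$ by $\ini(C_i)^{d}P-QC_i$, and since $\ini(C_i)^{d}P\in\bases{\pset{F}}$ and $QC_i\in\bases{\pset{F}}$ (as $C_i\in\pset{G}$), the result stays in the ideal; an induction over the steps gives $R\in\bases{\pset{F}}$. Second, $R$ is pseudo-reduced with respect to $\pset{C}$, i.e.\ $\deg(R,\lv(C_i))<\deg(C_i,\lv(C_i))$ for every $i$; here one checks that the later reductions by $C_{i-1},\ldots,C_1$ cannot raise the degree in $\lv(C_i)$, since neither those polynomials nor their initials involve $\lv(C_i)$.

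The crux is then to argue $R=0$. Suppose not; since $R\in\bases{\pset{F}}$ and $\pset{G}$ is a \grobner basis, $\lt(R)$ is divisible by $\lt(G)$ for some $G\in\pset{G}$. Put $x_k=\lv(G)$; then $\pset{G}^{\bases{k}}\neq\emptyset$, so $\pset{C}$ contains the minimal element $C=\Theta_k$ of $\pset{G}^{\bases{k}}$. The key inequality chain is $\deg(C,x_k)\le\deg(G,x_k)\le\deg(R,x_k)$: the first inequality is the minimality of $\Theta_k$ in lex order (for elements of $\pset{G}^{\bases{k}}$ the leading variable $x_k$ is the highest variable, so the lex comparison of leading terms begins at the $x_k$-exponent and forces $\deg(C,x_k)\le\deg(G,x_k)$), and the second follows from $\lt(G)\mid\lt(R)$ together with $\deg(\lt(G),x_k)=\deg(G,x_k)$. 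This contradicts pseudo-reducedness, $\deg(R,x_k)<\deg(C,x_k)$, whence $R=0$. I expect this matching of lex leading-term divisibility with leading-variable degrees, via the minimality of $\Theta_k$, to be the main obstacle and the only step requiring genuine care.

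For part $(b)$, the inclusion $\bases{\pset{C}}\subseteq\bases{\pset{F}}$ is immediate from $\pset{C}\subseteq\bases{\pset{F}}$. For $\bases{\pset{F}}\subseteq\sat(\pset{C})$, take $F\in\bases{\pset{F}}$; part $(a)$ gives $\prem(F,\pset{C})=0$, and the standard pseudo-remainder identity then yields $J^{m}F\in\bases{\pset{C}}$ for some $m$, where $J=\prod_{C\in\pset{C}}\ini(C)$, so $F\in\sat(\pset{C})$ by definition. For part $(c)$, $\zero(\pset{F})\subseteq\zero(\pset{C})$ follows from $\pset{C}\subseteq\bases{\pset{F}}$. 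For the reverse inclusion $\zero(\pset{C}/\ini(\pset{C}))\subseteq\zero(\pset{F})$, let $\bar{\point{x}}$ be a common zero of $\pset{C}$ at which no $\ini(C)$ vanishes; for any $F\in\pset{F}$, part $(b)$ gives $J^{m}F\in\bases{\pset{C}}$, so $J(\bar{\point{x}})^{m}F(\bar{\point{x}})=0$, and since $J(\bar{\point{x}})\neq0$ we conclude $F(\bar{\point{x}})=0$.
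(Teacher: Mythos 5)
The paper never proves this proposition: it is imported verbatim by citation from \cite[Prop.\ 3.1]{w2016o}, so there is no in-paper argument to compare yours against. Your proof is correct, and it is essentially the standard argument behind the cited result: the pseudo-remainder $R=\prem(F,\pset{C})$ stays in the ideal, is pseudo-reduced with respect to $\pset{C}$, and a nonzero $R$ would contradict the \grobner basis divisibility property combined with the lex-minimality of the polynomials selected into the W-characteristic set (the inequality chain $\deg(C,x_k)\le\deg(G,x_k)\le\deg(R,x_k)$ against $\deg(R,x_k)<\deg(C,x_k)$); this is the same leading-term/leading-degree reasoning the paper itself uses to prove its alternative characterization of W-characteristic sets, and your derivation of (b) and (c) from (a) via the pseudo-remainder formula is the intended one.
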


We say that the \emph{variable ordering condition is satisfied} for a triangular set $\pset{T}$ if all the parameters of $\pset{T}$ are ordered smaller than the leading variables of the polynomials in $\pset{T}$.
\begin{theorem}[{{\cite[Thm.\ 3.9]{w2016o}}}]\label{thm:divisible}
	Let $\pset{C} = [C_1, \ldots, C_r]$ be the W-characteristic set of $\bases{\pset{F}}\subseteq \kx$. If the variable ordering condition is satisfied for  $\pset{C}$ and $\pset{C}$ is not normal, then there exists an integer $k~(1\leq k < r)$ such that $[C_1, \ldots, C_k]$ is normal and $[C_1, \ldots, C_{k+1}]$ is not regular.
\end{theorem}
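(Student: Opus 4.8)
The plan is to first extract the rigid variable profile forced by the variable ordering condition, then pick $k$ as the length of the longest normal initial segment of $\pset{C}$, check normality of that segment for free, and finally spend the real effort on showing that adjoining $C_{k+1}$ destroys regularity because its initial is a zero-divisor. First I would record the shape imposed by the hypothesis: since every parameter is ordered below every leading variable, the parameters are forced to be exactly $x_1,\dots,x_{n-r}$ and the leading variables exactly $x_{n-r+1}<\cdots<x_n$, so $\lv(C_i)=x_{n-r+i}$ and $\ini(C_i)\in\fk[x_1,\dots,x_{n-r+i-1}]$ for each $i$. Under this profile an initial segment $[C_1,\dots,C_m]$ is normal precisely when each of $\ini(C_1),\dots,\ini(C_m)$ lies in $\fk[x_1,\dots,x_{n-r}]$ (involves no leading variable); note that $\ini(C_1)$ does so automatically, so $[C_1]$ is always normal.

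Next I would take $k$ to be the largest index for which $[C_1,\dots,C_k]$ is normal. Because $\pset{C}$ is assumed not normal we get $k<r$, and by the previous remark $k\ge 1$; by maximality $[C_1,\dots,C_k]$ is normal (hence regular), while $H:=\ini(C_{k+1})$ involves at least one leading variable. Let $x_{n-r+j}=\lv(C_j)$ (with $j\le k$ chosen maximal) be the greatest leading variable occurring in $H$; since all parameters sit below $x_{n-r+1}$, this $x_{n-r+j}$ is in fact the greatest variable occurring in $H$, so $\lt(H)$ genuinely involves $x_{n-r+j}$. Since normal implies regular, the segment $[C_1,\dots,C_{k+1}]$ is regular if and only if $H$ is a non-zero-divisor modulo $\ideal{Q}:=\sat([C_1,\dots,C_k])$. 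The theorem thus reduces to proving that $H$ is a zero-divisor, or lies in $\ideal{Q}$.

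The crux, and the step I expect to be by far the hardest, is exactly this zero-divisor claim. My strategy is to argue by contradiction with the minimality of $C_{k+1}$ in the reduced lex \grobner basis $\pset{G}$. Suppose $H$ were a non-zero-divisor modulo $\ideal{Q}$; since $[C_1,\dots,C_k]$ is a regular chain, $\ideal{Q}=\{P:\prem(P,[C_1,\dots,C_k])=0\}$ is unmixed, and ``non-zero-divisor'' then means $H$ vanishes on no irreducible component of $\zero(\ideal{Q})$. Geometrically the obstruction I want to produce is a component of $\zero(\ideal{Q})$ on which $H$ and $C_j$ acquire a common zero in the variable $x_{n-r+j}$: because $C_j\in\bases{C_1,\dots,C_k}\subseteq\ideal{Q}$ constrains $x_{n-r+j}$ on every component, the appearance of $x_{n-r+j}$ in $H$ ought to force $H$ to vanish on at least one branch. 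Algebraically I would try to realize this by eliminating $x_{n-r+j}$ from $H$ against $C_j$ (via the subresultant/gcd of $H$ and $C_j$ in $x_{n-r+j}$, computed over the residues modulo $\sat([C_1,\dots,C_{j-1}])$) and combining the result with $C_{k+1}$ to manufacture a polynomial $\Psi\in\bases{\pset{F}}$ whose leading variable is still $x_{n-r+k+1}$ but whose leading coefficient has a strictly smaller leading term than $\lt(H)$; reducing $\Psi$ by $\pset{G}$ and invoking reducedness (no term of $C_{k+1}$, in particular $\lt(H)$, is divisible by any other $\lt(G)$) should force the reduction to stay with leading variable $x_{n-r+k+1}$, yielding a basis element with leading term below $\lt(C_{k+1})$ and contradicting that $C_{k+1}$ is the smallest such element.

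I expect the genuine difficulties to cluster here. One is that the naive ``lower the degree of $H$ against $C_j$'' move only works when $\deg(H,x_{n-r+j})\ge\deg(C_j,x_{n-r+j})$; the common-zero phenomenon can occur with $H$ of much smaller degree (as when $C_j$ simply has an irreducible factor that also divides $H$ on one branch), so the degree-reduction must be replaced by a genuine gcd/subresultant argument on each component, using that $\ini(C_j)$ is clean and hence a non-zero-divisor. The second, more delicate point is that $\ideal{Q}=\sat([C_1,\dots,C_k])$ properly contains the elimination ideal $\bases{\pset{F}}\cap\fk[x_1,\dots,x_{n-r+k}]$, so I must be careful that the relations I build actually live in $\bases{\pset{F}}$ and that their further \grobner reduction cannot escape the leading variable $x_{n-r+k+1}$; Proposition~\ref{prop:zero}, together with the pseudo-remainder characterization of $\sat$ for regular chains, are the tools I would use to bridge this gap and close the contradiction.
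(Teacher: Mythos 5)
A word of context first: the paper you were given does not actually prove Theorem~\ref{thm:divisible}; it imports it from \cite{w2016o}. Your reduction of the theorem to the zero-divisor claim is correct and clean (under the variable ordering condition $\lv(C_i)=x_{n-r+i}$, the maximal normal prefix gives $1\le k<r$, and non-regularity of $[C_1,\ldots,C_{k+1}]$ amounts to $H=\ini(C_{k+1})$ lying in, or being a zero-divisor modulo, $\ideal{Q}=\sat([C_1,\ldots,C_k])$), and your skeleton for the crux --- contradiction, elimination against the chain, then minimality of $C_{k+1}$ in the reduced lex basis --- is the right one. But the execution has a genuine gap. Your guiding principle, that the appearance of $x_{n-r+j}$ in $H$ ``ought to force'' $H$ to vanish on a component of $\zero(\ideal{Q})$, is false for triangular sets in general: $[x^2-2,\,(x+3)y-1]$ (with $x<y$) is regular but not normal, and $x+3$ vanishes on no component of $\zero(x^2-2)$. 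So no gcd/common-zero argument at the level of the chain alone can succeed; the statement is a theorem about W-characteristic sets, not about chains. What your sketch never states, and what actually makes your $\Psi$-construction work, is the regularity criterion for regular chains: $H$ is a non-zero-divisor modulo $\sat([C_1,\ldots,C_k])$ if and only if the iterated resultant $R=\res(H,[C_1,\ldots,C_k])$ is nonzero, together with the identity $R=AH+\sum_i B_iC_i$. This is the keystone: it converts the contradiction hypothesis into a \emph{nonzero} $R\in\fk[x_1,\ldots,x_{n-r}]$, and then, writing $d=\deg(C_{k+1},x_{n-r+k+1})$ and $C_{k+1}=Hx_{n-r+k+1}^d+T$, the polynomial $\Psi:=AC_{k+1}+\sum_i B_i x_{n-r+k+1}^{d}C_i=R\,x_{n-r+k+1}^{d}+AT$ is a nonzero element of $\bases{\pset{F}}$ with $\lt(\Psi)=\lt(R)\,x_{n-r+k+1}^{d}<\lt(H)\,x_{n-r+k+1}^{d}=\lt(C_{k+1})$. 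Without the nonvanishing of $R$ the manufactured polynomial can degenerate and no contradiction arises; moreover your fallback (gcds ``on each component'' over the residues) runs into exactly the recombination problem you flag --- the componentwise relations do not live in $\bases{\pset{F}}$ --- and eliminating against $C_j$ alone does not even remove the other leading variables that may occur in $H$.

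The endgame also needs one case you do not address. Since $\Psi\in\bases{\pset{G}}$ and $\pset{G}$ is a \grobner basis, some $\lt(G)$ with $G\in\pset{G}$ divides $\lt(\Psi)=\lt(R)\,x_{n-r+k+1}^d$. If $\lt(G)$ involves $x_{n-r+k+1}$, then $\lv(G)=x_{n-r+k+1}$ and the minimality of $C_{k+1}$ gives $\lt(C_{k+1})\le\lt(G)\le\lt(\Psi)<\lt(C_{k+1})$, the contradiction you want. But a priori $\lt(G)$ could instead divide $\lt(R)$, i.e.\ $G$ could have a parameter as its leading variable; your parenthetical appeal to the reducedness of $C_{k+1}$ does not exclude this. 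It is excluded by the definition of the W-characteristic set: a variable is a parameter of $\pset{C}$ precisely when it is the leading variable of no element of $\pset{G}$, since $\pset{C}$ picks up the smallest element of $\pset{G}$ for \emph{every} leading variable that occurs in $\pset{G}$. With the resultant criterion inserted and this case closed, your argument becomes a complete proof, and in substance it is the argument behind the cited result.
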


Some structural properties about pseudo-divisibility among polynomials in the  \grobner bases can be found in \cite{w2016o}. Based on those properties, an effective algorithm for normal triangular decomposition of polynomial sets has been proposed in \cite{WDM20}.

\begin{definition}\label{def:charpair}\rm
	A pair $(\pset{G},\pset{C})$ of polynomial sets in $\kx$ is called a \emph{characteristic pair} if $\pset{G}$ is a \grobner basis and $\pset{C}$ is the W-characteristic set of $\bases{\pset{G}}$.
	We say that the characteristic pair $(\pset{G},\pset{C})$ is \emph{strong} if $\sat(\pset{C})=\bases{\pset{G}}$.
\end{definition}

A characteristic pair $(\pset{G},\pset{C})$ is said to be \emph{regular} or \emph{normal} if $\pset{C}$ is regular or normal, respectively. Let $(\{1\},[1])$ be regarded as a trivial regular/normal characteristic pair. If a characteristic pair is strong and regular, then the \grobner basis $\pset{G}$ in the pair is said to be \emph{strong regular}. By regular or normal characteristic decomposition of a polynomial set $\pset{F}\subseteq \kx$, we mean a finite set of regular or normal characteristic pairs $(\pset{G}_1,\pset{C}_1), \ldots, (\pset{G}_{e}, \pset{C}_e)$ satisfying the ideal relations in \eqref{eq:chardec}.

The expression \eqref{eq:chardec} can be rewritten in terms of the zero sets or varieties as
\begin{equation} \label{eq:idealdec}
\zero(\pset{F}) = \bigcup_{i=1}^e\zero(\pset{G}_i) =  \bigcup_{i=1}^e\zero(\sat(\pset{C}_i)).
\end{equation}
When the regular or normal characteristic pairs $(\pset{G}_i,\pset{C}_i)$ are computed by the algorithms described in \cite{w00c,WDM20}, the zero relation
\begin{equation} \label{eq:NormalDec}
\zero(\pset{F}) = \bigcup_{i=1}^e\zero(\pset{C}_i / \ini(\pset{C}_i))
\end{equation}
also holds.

To compute a desired characteristic decomposition of a polynomial set $\pset{F}$, we can first proceed to decompose $\pset{F}$ into finitely many \grobner bases $\pset{G}_i$ of certain kinds and then form the characteristic pairs $(\pset{G}_i,\pset{C}_i)$ by simply extracting the W-characteristic sets $\pset{C}_i$ from $\pset{G}_i$.

\section{Computing strong regular characteristic pairs}\label{sec:satdec}

The main objective of this section is to show how to construct
from an arbitrary ideal $\ideal{I}$ an \src~pair $(\pset{G},\pset{C})$ such that $\ideal{I}\subseteq \sat(\pset{C})$ by computing \grobner bases. The construction enables us to devise a novel algorithm for decomposing any polynomial set into finitely many \src~pairs.



\begin{definition}\label{def:characterizable}\rm
	A \grobner basis $\pset{G}$ is said to be \emph{characterizable} if $\bases{\pset{G}}=\sat(\pset{C})$, where $\pset{C}$ is the W-characteristic set of $\bases{\pset{G}}$. 
\end{definition}

Obviously, the W-characteristic set $\pset{C}$ extracted from the \grobner basis $\pset{G}$ is unique. The following proposition shows that if $\sat(\pset{C})=\bases{\pset{G}}$, then $\pset{C}$ must be regular and thus $(\pset{G},\pset{C})$ is an \src~pair. In other words, the \grobner basis $\pset{G}$ in any \src~pair $(\pset{G},\pset{C})$ is characterizable. Otherwise, $\sat(\pset{C})\neq \bases{\pset{G}}$; in this case, $\pset{G}$ may not necessarily be determined by $\pset{C}$ (i.e., there may be two \grobner bases $\pset{G}_1$ and $\pset{G}_2$ such that $\bases{\pset{G}_1}$ and $\bases{\pset{G}_2}$ have the same W-characteristic set $\pset{C}$).

\begin{proposition}\label{prop:isregular}
	The W-characteristic set of the ideal generated by any characterizable \grobner basis is regular.
\end{proposition}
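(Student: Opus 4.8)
The plan is to reduce the statement to the standard characterization of regularity recalled in Section~\ref{sec:pre}: a triangular set $\pset{T}$ is regular if and only if $\sat(\pset{T})=\{P\in\kx:\,\prem(P,\pset{T})=0\}$. So, writing $\pset{C}$ for the W-characteristic set of $\bases{\pset{G}}$, it suffices to prove the single set equality $\sat(\pset{C})=\{P\in\kx:\,\prem(P,\pset{C})=0\}$ under the characterizability hypothesis, after which regularity of $\pset{C}$ follows at once.

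First I would establish the inclusion $\sat(\pset{C})\subseteq\{P:\,\prem(P,\pset{C})=0\}$, and this is the only place where characterizability is actually needed. By Definition~\ref{def:characterizable}, $\sat(\pset{C})=\bases{\pset{G}}$, so every $P\in\sat(\pset{C})$ belongs to $\bases{\pset{G}}$. Applying Proposition~\ref{prop:zero}$(a)$ with $\pset{F}=\pset{G}$ (note $\pset{C}$ is precisely the W-characteristic set of $\bases{\pset{G}}$) then yields $\prem(P,\pset{C})=0$ directly, giving the desired inclusion.

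Next I would verify the reverse inclusion $\{P:\,\prem(P,\pset{C})=0\}\subseteq\sat(\pset{C})$, which holds for an arbitrary triangular set and requires no further hypothesis. If $\prem(P,\pset{C})=0$, the pseudo-remainder identity produces $J^{m}P\in\bases{\pset{C}}$ for the product $J$ of the initials of the polynomials in $\pset{C}$ and some integer $m\ge 0$; by the definition of $\sat(\pset{C})$ this means $P\in\sat(\pset{C})$. Combining the two inclusions gives the equality $\sat(\pset{C})=\{P:\,\prem(P,\pset{C})=0\}$, and the regularity criterion finishes the argument.

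I expect the proof to be short, and its only genuinely delicate point to be the correct invocation of Proposition~\ref{prop:zero}$(a)$: one must notice that characterizability is exactly what converts the always-valid containment $\bases{\pset{G}}\subseteq\sat(\pset{C})$ into an equality, so that pseudo-remainder vanishing on $\bases{\pset{G}}$ can be transported to all of $\sat(\pset{C})$. A conceivable alternative would be a contradiction argument through the structure Theorem~\ref{thm:divisible}, assuming $\pset{C}$ is not regular; I would avoid that route, however, since Theorem~\ref{thm:divisible} presupposes the variable ordering condition, which is not guaranteed for a general W-characteristic set, whereas the regularity characterization above applies unconditionally.
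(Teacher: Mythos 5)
Your proof is correct and follows essentially the same route as the paper's: use characterizability to place every element of $\sat(\pset{C})$ in $\bases{\pset{G}}$, invoke Proposition~\ref{prop:zero}$(a)$ to get pseudo-remainder zero, and conclude regularity from the characterization $\sat(\pset{C})=\{P:\prem(P,\pset{C})=0\}$ (the paper cites this as Thm.~6.1 of Aubry--Lazard--Moreno Maza). The only difference is that you spell out the reverse inclusion via the pseudo-division identity, which the paper simply asserts as standard.
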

\begin{proof}
	Let $\pset{G}$ be any characterizable \grobner basis and $\pset{C}=[C_1,\ldots,C_r]$ be the W-characteristic set of $\bases{\pset{G}}$.
	For any $P\in \sat(\pset{C})$, we have $P\in \bases{\pset{G}}$ since $\bases{\pset{G}}=\sat(\pset{C})$. From Proposition \ref{prop:zero}(a) one can see that $\prem(P,\pset{C})=0$, which means that $\sat(\pset{C})\subseteq \{P\in\kx: \prem(P,\pset{C})=0\}\subseteq \sat(\pset{C})$. Therefore, by \cite[Thm.~6.1]{a99t} $\pset{C}$ is regular.
\end{proof}

\begin{theorem}[Strong Regularization]\label{the:scharpair}
	Let $\pset{C}_1$ be the W-characteristic set of an arbitrary ideal $\ideal{I}\subset \kx$, $\pset{G}_i$ be the \grobner basis of $\sat(\pset{C}_{i-1})$, and $\pset{C}_i$ be the W-characteristic set of $\bases{\pset{G}_i}$ for $i\geq 2$. Then there exists an integer $m\geq 2$ such that $\ideal{I}\subseteq \sat(\pset{C}_1)\subseteq \sat(\pset{C}_m)$ and either $\pset{C}_m$ is a regular set, or $\sat(\pset{C}_m)=\bases{1}$.
\end{theorem}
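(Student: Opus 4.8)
The plan is to exhibit the sequence $\sat(\pset{C}_1),\sat(\pset{C}_2),\ldots$ as an ascending chain of ideals in the Noetherian ring $\kx$ and to show that the chain stabilizes exactly where characterizability, hence regularity, is forced. First I would record the two inclusions that generate the chain. For each $i\geq 2$ the basis $\pset{G}_i$ generates $\sat(\pset{C}_{i-1})$ by construction, i.e.\ $\bases{\pset{G}_i}=\sat(\pset{C}_{i-1})$, while $\pset{C}_i$ is the W-characteristic set of $\bases{\pset{G}_i}$. Applying Proposition~\ref{prop:zero}(b) with $\pset{F}=\pset{G}_i$ gives $\bases{\pset{G}_i}\subseteq\sat(\pset{C}_i)$, that is, $\sat(\pset{C}_{i-1})\subseteq\sat(\pset{C}_i)$. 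Combined with $\ideal{I}\subseteq\sat(\pset{C}_1)$, which is again Proposition~\ref{prop:zero}(b) applied to a generating set of $\ideal{I}$, this yields the ascending chain
\[
\ideal{I}\subseteq\sat(\pset{C}_1)\subseteq\sat(\pset{C}_2)\subseteq\cdots\subseteq\sat(\pset{C}_i)\subseteq\cdots.
\]

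Since $\kx$ is Noetherian, this chain stabilizes, so there is an index $m\geq 2$ with $\sat(\pset{C}_{m-1})=\sat(\pset{C}_m)$ (if equality already holds at the first step then $m=2$; otherwise the first consecutive equality occurs later, and in all cases $m\geq 2$). At this index $\bases{\pset{G}_m}=\sat(\pset{C}_{m-1})=\sat(\pset{C}_m)$, and since $\pset{C}_m$ is by definition the W-characteristic set of $\bases{\pset{G}_m}$, the equality $\bases{\pset{G}_m}=\sat(\pset{C}_m)$ says precisely that $\pset{G}_m$ is characterizable in the sense of Definition~\ref{def:characterizable}. I would then split into two cases: if $\sat(\pset{C}_m)=\bases{1}$ we are in the second alternative of the conclusion; otherwise $\bases{\pset{G}_m}$ is a proper ideal generated by a characterizable \grobner basis, and Proposition~\ref{prop:isregular} yields that $\pset{C}_m$ is a regular set. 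The inclusion $\ideal{I}\subseteq\sat(\pset{C}_1)\subseteq\sat(\pset{C}_m)$ is immediate from the chain, completing the argument.

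The argument is short once the ascending-chain structure is in place, so the main point requiring care is verifying that the chain is genuinely ascending and correctly identifying the stabilization index as the moment at which characterizability is triggered; concretely, one must use the identity $\bases{\pset{G}_i}=\sat(\pset{C}_{i-1})$ in tandem with Proposition~\ref{prop:zero}(b). The only other delicate point is the bookkeeping of the unit case: it is consistent with the convention $\sat([1])=\bases{1}$, and it is exactly this that separates the ``$\pset{C}_m$ is regular'' alternative from the ``$\sat(\pset{C}_m)=\bases{1}$'' alternative, since Proposition~\ref{prop:isregular} delivers regularity only for a characterizable \grobner basis of a proper ideal.
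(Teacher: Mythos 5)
Your proof is correct and follows essentially the same route as the paper's: establish the ascending chain $\ideal{I}\subseteq\sat(\pset{C}_1)\subseteq\sat(\pset{C}_2)\subseteq\cdots$ via Proposition~\ref{prop:zero} and the identity $\bases{\pset{G}_i}=\sat(\pset{C}_{i-1})$, invoke the Ascending Chain Condition to get stabilization at some $m\geq 2$, observe that $\pset{G}_m$ is then characterizable unless $\sat(\pset{C}_m)=\bases{1}$, and conclude regularity of $\pset{C}_m$ from Proposition~\ref{prop:isregular}. Your handling of the unit case and the citation of part~(b) (rather than the paper's citation of part~(a)) are minor, harmless differences in bookkeeping, not a different argument.
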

\begin{proof}	From Proposition \ref{prop:zero}(a) and $\sat(\pset{C}_{i-1})=\bases{\pset{G}_i}\subseteq \sat(\pset{C}_i)$ we know that $\sat(\pset{C}_1)\subseteq \sat(\pset{C}_{2})\subseteq \cdots \subseteq \sat(\pset{C}_i)\subseteq \cdots$ for $i\geq 2$. Then by the Ascending Chain Condition there exists an $m\geq 2$ such that $\sat(\pset{C}_{m-1})=\sat(\pset{C}_{m})=\cdots$. It follows that the \grobner basis $\pset{G}_{m}$ is characterizable unless $\sat(\pset{C}_m)= \bases{1}$. Therefore, one sees from Proposition \ref{prop:isregular} that $\pset{C}_{m}$ is regular or $\sat(\pset{C}_m)=\bases{1}$.
\end{proof}

The process of constructing the chain of W-characteristic sets $\pset{C}_i$ of $\ideal{I}_{i}=\bases{\pset{G}_{i}}$ for $i=1,\ldots,m$ such that $\pset{C}_m$ is regular and $\sat(\pset{C}_m)=\sat(\pset{C}_{m-1})$ (called \emph{strong regularization}) shown in Theorem~\ref{the:scharpair} is depicted by the diagram in Figure \ref{fig:sregularization}.
\begin{figure}[H]
	\centering
	
	\begin{equation*}
	\begin{array}{cccccccccl}
	\ideal{I} & =& \ideal{I}_1&=&\bases{\pset{G}_1}& \longrightarrow &\pset{C}_1&\longrightarrow & \sat(\pset{C}_1)&(\neq \bases{1})\\
	
	&& \cap\rule{0.5pt}{5pt} && && &&\cap\rule{0.5pt}{5pt} &\\	
	& =& \ideal{I}_2&=&\bases{\pset{G}_2}& \longrightarrow &\pset{C}_2&\longrightarrow & \sat(\pset{C}_2)&(\neq \bases{1})\\
	&& \cap\rule{0.5pt}{5pt} & && &&&\cap\rule{0.5pt}{5pt} &\\[-4pt]
	&& \vdots && & &&&\vdots &\\	
	
	&& \cap\rule{0.5pt}{5pt} && && &&\cap\rule{0.5pt}{5pt} &\\
	& =& \ideal{I}_{m-1}&=&\bases{\pset{G}_{m-1}}& \longrightarrow &\pset{C}_{m-1}&\longrightarrow & \sat(\pset{C}_{m-1})\!\!\!&(\neq \bases{1})\\
	&& \cap\rule{0.5pt}{5pt} & & &&&&\shortparallel &\\[-11pt]
	&&  & & &&&&\shortparallel &\\
	& =& \ideal{I}_m&=&\bases{\pset{G}_m}& \longrightarrow &\pset{C}_m&\longrightarrow & \sat(\pset{C}_m)&=\,\sat(\pset{C}_{m+1})\,=\,\cdots
	
	\end{array}
	\end{equation*}
	
	\caption{Strong regularization of W-characteristic sets by means of saturation}\label{fig:sregularization}
\end{figure}


\begin{remarks}\rm
	From \cite[Thm.\ 4.4]{MW19} we know that for any normal set $\pset{T}$, if its parameters are ordered smaller than the other variables, then the W-characteristic set $\pset{C}$ of $\sat(\pset{T})$ is also normal. This corresponds in some way to the case of Theorem $\ref{the:scharpair}$ when the variable ordering condition is satisfied and $\pset{C}_1$ is assumed to be regular; in this case $\pset{C}_2$ is always regular and thus $m=2$. However, under the hypothesis of Theorem $\ref{the:scharpair}$, $\pset{C}_2$ is not necessarily regular (i.e., $m$ may be greater than $2$), as shown by Example \ref{ex:scharpair} (where $m=3$). Example \ref{ex:counterexample} shows that Theorem 4.4 in \cite{MW19} does not necessarily hold for regular set $\pset{T}$ without the variable ordering condition.
\end{remarks}

For any triangular set $\pset{T}$, $\sat(\pset{T})$ is said to be \emph{equiprojectable} if there exists a regular set $\bar{\pset{T}}$ such that $\sat(\pset{T})=\sat(\bar{\pset{T}})$ \cite{AUB00}. The following corollary follows directly from \cite[Thm.\ 4.4]{MW19}.

\begin{corollary}\label{prop:prop}
	For any triangular set $\pset{T}$ satisfying the variable ordering condition, $\sat(\pset{T})$ is equiprojectable if and only if the W-characteristic set $\pset{C}$ of $\sat(\pset{T})$ is regular and $\sat(\pset{T})=\sat(\pset{C})$.
\end{corollary}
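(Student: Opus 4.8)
The plan is to prove the two implications separately, treating the backward direction as immediate and concentrating the work on the forward one. Write $\pset{C}=\wcs(\sat(\pset{T}))$, let $r=|\pset{T}|$, and set $d=n-r$; since $\pset{T}$ satisfies the variable ordering condition, its parameters are exactly $x_1,\ldots,x_d$ and its leading variables are $x_{d+1},\ldots,x_n$. For the ($\Leftarrow$) direction, suppose $\pset{C}$ is regular and $\sat(\pset{T})=\sat(\pset{C})$. Then $\pset{C}$ is itself a regular set whose saturated ideal equals $\sat(\pset{T})$, so by the definition of equiprojectability one may take $\bar{\pset{T}}=\pset{C}$, and $\sat(\pset{T})$ is equiprojectable. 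This requires no further argument.

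For the ($\Rightarrow$) direction, the idea is to reduce to \cite[Thm.\ 4.4]{MW19}, which asserts that for a regular set satisfying the variable ordering condition the W-characteristic set of its saturated ideal is again regular and shares the same saturated ideal. Assuming $\sat(\pset{T})$ is equiprojectable, fix a regular set $\bar{\pset{T}}$ with $\sat(\bar{\pset{T}})=\sat(\pset{T})$. Because the W-characteristic set depends only on the ideal, $\wcs(\sat(\bar{\pset{T}}))=\wcs(\sat(\pset{T}))=\pset{C}$; hence, once Theorem 4.4 can be applied to $\bar{\pset{T}}$, it follows at once that $\pset{C}$ is regular and $\sat(\pset{C})=\sat(\bar{\pset{T}})=\sat(\pset{T})$, which is exactly the desired conclusion.

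The main obstacle is therefore to verify that $\bar{\pset{T}}$ itself, rather than $\pset{T}$, satisfies the variable ordering condition, since the regular representative supplied by equiprojectability is a priori unrelated to $\pset{T}$. The key observation is that the variable ordering condition on $\pset{T}$ forces $\sat(\pset{T})\cap\fk[x_1,\ldots,x_d]=(0)$: the parameters of any triangular set are algebraically independent modulo its saturated ideal, so $\sat(\pset{T})$ meets the parameter ring $\fk[x_1,\ldots,x_d]$ trivially, and hence $\sat(\pset{T})\cap\fk[x_1,\ldots,x_i]=(0)$ for every $i\le d$. Now if some polynomial of $\bar{\pset{T}}$ had leading variable $x_j$ with $j\le d$, it would be a nonzero element of $\bases{\bar{\pset{T}}}\subseteq\sat(\bar{\pset{T}})=\sat(\pset{T})$ lying in $\fk[x_1,\ldots,x_j]$, contradicting this triviality; thus every leading variable of $\bar{\pset{T}}$ exceeds $x_d$. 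Using the classical fact that the saturated ideal of a triangular set $\pset{U}$ is equidimensional of dimension $n-|\pset{U}|$, the equality $\sat(\bar{\pset{T}})=\sat(\pset{T})$ gives $|\bar{\pset{T}}|=n-d=r$. Having $r$ leading variables all greater than $x_d$, they must be exactly $x_{d+1},\ldots,x_n$, so $\bar{\pset{T}}$ has parameters $x_1,\ldots,x_d$ and satisfies the variable ordering condition. Theorem 4.4 then applies to $\bar{\pset{T}}$ and closes the argument.

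I expect the only delicate point to be this last paragraph: pinning down the leading variables of the unknown regular set $\bar{\pset{T}}$ purely from the ideal $\sat(\pset{T})$. The two structural facts I invoke (equidimensionality of saturated ideals of triangular sets, and the algebraic independence of the parameters modulo the saturated ideal) are classical and may be quoted from \cite{a99t,W2001E}; once they are available, the dimension count together with the triviality of $\sat(\pset{T})\cap\fk[x_1,\ldots,x_d]$ leaves no freedom in the choice of parameters, so the reduction to \cite[Thm.\ 4.4]{MW19} is forced and the corollary follows directly.
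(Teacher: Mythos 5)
Your proof is correct and takes the same route as the paper: the paper offers no argument beyond the remark that the corollary ``follows directly from [MW19, Thm.\ 4.4],'' and your reduction to that theorem (taking $\bar{\pset{T}}=\pset{C}$ for the backward direction, applying the theorem to the equiprojectability witness for the forward one) is exactly what is intended. The only substantive work in your write-up --- showing that the witness $\bar{\pset{T}}$ must itself satisfy the variable ordering condition, via the algebraic independence of the parameters modulo $\sat(\pset{T})$ and the dimension count $|\bar{\pset{T}}|=n-d$ (both valid here because equiprojectability forces $\sat(\pset{T})\neq\bases{1}$) --- is precisely the detail the paper glosses over, and it is sound.
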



Corollary \ref{prop:prop} points out explicitly how to construct $\bar{\pset{T}}$ from $\pset{T}$ and thus how to check whether $\sat(\pset{T})$ is equiprojectable.

\begin{example}\label{ex:scharpair}\rm
	Let $\pset{C}_1=[x^2-x,(y^2-x)(y-1),(y-1)z]\subseteq \mathbb{K}[x,y,z]$ with $x<y<z$. It is easy to verify that $\pset{C}_1$ is a \grobner basis, it is a triangular set, and it is also the W-characteristic set of $\ideal{I}=\bases{\pset{C}_1}$, but $\pset{C}_1$ is not regular. The  \grobner basis of the saturated ideal of $\pset{C}_1$ may be easily computed as $\pset{G}_2=\{z,x(x-1),y^2-x,x(y+1)\}$. The W-characteristic set of $\bases{\pset{G}_2}=\sat(\pset{C}_1)$ is $\pset{C}_2=[x(x-1),x(y+1),z]$. Obviously $\pset{C}_2$ is not regular and thus $\sat(\pset{C}_1)$ is not equiprojectable, while the W-characteristic set $\pset{C}_3=[x-1,y+1,z]$ of $\sat(\pset{C}_2)$ is regular. Therefore, for this example the integer $m$ in Theorem \ref{the:scharpair} is equal to $3$.
\end{example}

\begin{example}\label{ex:counterexample}\rm
	Consider the regular set $\pset{T}=[y^2,x^2z+xy]$ in $\mathbb{K}[x,y,z]$ with $y<x<z$ (cf.\ \cite[Ex.\,3.20]{WDM20}). The reduced lex \grobner basis $\pset{G}$ of the saturated ideal $\sat(\pset{T})$ of $\pset{T}$ is $\{y^2,yz,xz+y,z^2\}$, so the W-characteristic set $\pset{C}$ of $\bases{\pset{G}}$ is $[y^2,yz]$. One can easily see that $\pset{C}$ is not regular.
\end{example}

Algorithm \ref{alg:strongpair} can be used to compute an \src~pair from any polynomial ideal. Its termination and correctness follow directly from Theorem~\ref{the:scharpair}.

\begin{algorithm}[!h]
	\label{alg:strongpair}
	\KwIn{$\ideal{I}$, an ideal in $\kx$.}
	\KwOut{$(\pset{G},\pset{C})$, an \src~pair such that $\ideal{I}\subseteq \sat(\pset{C})=\bases{\pset{G}}$, or $(\{1\},[1])$ when $\ideal{I}=\bases{1}$.}
	
	{
		\BlankLine		
		$\pset{G}:=$ \grobner basis of $\ideal{I}$;\\
		\eIf{$\pset{G}\neq \{1\}$}
		{
			$\pset{C}:=$ W-characteristic set of $\bases{\pset{G}}$;\\
			\While{$\sat(\pset{C})\neq \bases{\pset{G}}$\label{line:ideal}}
			{	
				$\pset{G} :=$ \grobner basis of $\sat(\pset{C})$;\\
				\If{$\pset{G}=\{1\}$}
				{
					\Return{$(\{1\},[1])$};
				}
				$\pset{C}:=$  W-characteristic set of $\bases{\pset{G}}$;\label{line:wsett}\\
			}
		}
		{
			$\pset{C}:=[1]$;
		}		
		\Return{$(\pset{G},\pset{C})$}
	}\label{line:return}
	\caption{$(\pset{G},\pset{C}):= \algt(\ideal{I})$ (for computing an \src~pair from a polynomial~ideal)}
\end{algorithm}
\setlength{\textfloatsep}{2pt}

\section{Strong regular characteristic decomposition}\label{sec:algsatdec}
In this section, we present an algorithm to compute strong regular characteristic decompositions of polynomial sets using \grobner basis and ideal computations. The main ingredients of the algorithm are two subalgorithms: one for computing \src~pairs with ideal saturation and the other for computing \src~divisors with ideal quotient.


For any two ideals $\ideal{I}$ and $\ideal{J}$, we say that $\ideal{J}$ \emph{divides} $\ideal{I}$ if $\ideal{I}:\ideal{J}\neq \ideal{I}$.

\begin{definition}\rm
	Let $\ideal{I}$ be an ideal in $\kx$ and $\pset{C}$ be the W-characteristic set of $\ideal{I}$. $\pset{C}$ is said to be \emph{strong} if $\sat(\pset{C})=\ideal{I}$, or \emph{morbid} if $\sat(\pset{C})$ does not divide $\ideal{I}$. 
\end{definition}

Let $\pset{C}$ be the W-characteristic set of an ideal $\ideal{I}\subseteq\kx$. If $\pset{C}$ is strong, then $\pset{C}$ carries all the information of $\ideal{I}$. If $\pset{C}$ is morbid, then the associated primes of $\ideal{I}$ are all properly contained in those of $\sat(\pset{C})$; in this case the structure of $\ideal{I}$ is so complicated that $\pset{C}$ carries almost no information of $\ideal{I}$.

\begin{proposition}\label{prop:quot}
	Let $\ideal{I}$ and $\ideal{J}$ be two ideals in $\kx$ such that $\ideal{I}\subseteq  \ideal{J}$. Then $\rad{\ideal{I}}=\rad{\ideal{J}}\cap \rad{\ideal{I}:\ideal{J}}$.
\end{proposition}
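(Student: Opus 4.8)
The plan is to prove the two inclusions separately, using only elementary properties of the ideal quotient together with the standard fact that the radical commutes with finite intersections, i.e.\ $\rad{\ideal{A}\cap\ideal{B}}=\rad{\ideal{A}}\cap\rad{\ideal{B}}$ for any ideals $\ideal{A},\ideal{B}\subseteq\kx$.

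For the inclusion $\rad{\ideal{I}}\subseteq\rad{\ideal{J}}\cap\rad{\ideal{I}:\ideal{J}}$, I would first record the two containments $\ideal{I}\subseteq\ideal{J}$ (the hypothesis) and $\ideal{I}\subseteq\ideal{I}:\ideal{J}$, the latter holding for every ideal since $\ideal{I}\cdot\ideal{J}\subseteq\ideal{I}$. Together these give $\ideal{I}\subseteq\ideal{J}\cap(\ideal{I}:\ideal{J})$, and taking radicals, combined with distributivity of the radical over intersection, yields the inclusion immediately. This is the easy half, and it is the only place where the hypothesis $\ideal{I}\subseteq\ideal{J}$ is used.

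For the reverse inclusion, I would take $F\in\rad{\ideal{J}}\cap\rad{\ideal{I}:\ideal{J}}$ and produce positive integers $a,b$ with $F^a\in\ideal{J}$ and $F^b\in\ideal{I}:\ideal{J}$. Since $F^b\in\ideal{I}:\ideal{J}$ means precisely $F^b\ideal{J}\subseteq\ideal{I}$, and since $F^a\in\ideal{J}$, I would multiply the two witnessing powers to conclude $F^{a+b}=F^b\cdot F^a\in\ideal{I}$, whence $F\in\rad{\ideal{I}}$. This is the step in which the defining property of the quotient enters essentially.

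The argument is entirely elementary and no step presents a genuine obstacle; the one point deserving care is the reverse inclusion, where one must \emph{pair} the two witnessing exponents---a power of $F$ lying in $\ideal{J}$ and a power of $F$ sending $\ideal{J}$ into $\ideal{I}$---and multiply them to land inside $\ideal{I}$. It is worth noting that this reverse inclusion in fact holds for arbitrary ideals $\ideal{I},\ideal{J}$; the assumption $\ideal{I}\subseteq\ideal{J}$ is needed only to secure $\rad{\ideal{I}}\subseteq\rad{\ideal{J}}$ in the forward direction.
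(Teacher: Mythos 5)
Your proof is correct and follows essentially the same route as the paper's: the easy inclusion via $\ideal{I}\subseteq\ideal{J}\cap(\ideal{I}:\ideal{J})$, and the reverse inclusion by multiplying the two witnessing powers $F^a\in\ideal{J}$ and $F^b\in\ideal{I}:\ideal{J}$ to land in $\ideal{I}$. The only cosmetic difference is that you invoke distributivity of the radical over intersection where the paper simply uses monotonicity of the radical, which changes nothing of substance.
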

\begin{proof}
	On one hand, for any polynomial $P\in \ideal{I}$, we have $P\in \ideal{J}$ and $P\in \ideal{I}:\ideal{J}$ since $\ideal{I}\subseteq \ideal{J}$ and $\ideal{I}\subseteq \ideal{I}:\ideal{J}$. It follows that $P\in \ideal{J}\cap (\ideal{I}:\ideal{J})$, which implies that $\rad{\ideal{I}}\subseteq \rad{\ideal{J}}\cap \rad{\ideal{I}:\ideal{J}}$. On the other hand, for any polynomial $P\in \rad{\ideal{J}}\cap \rad{\ideal{I}:\ideal{J}}$, we have $P\in \rad{\ideal{J}}$ and $P\in \rad{\ideal{I}:\ideal{J}}$. As $P\in \rad{\ideal{J}}$, there exists an $m$ such that $P^m\in \ideal{J}$; as $P$ also belongs to $\rad{\ideal{I}:\ideal{J}}$, one can find an $l$ such that $P^l H\in \ideal{I}$ for any polynomial $H\in\ideal{J}$. Therefore, $P^lP^m\in \ideal{I}$ and thus $P\in \rad{\ideal{I}}$. Hence $\rad{\ideal{J}}\cap \rad{\ideal{I}:\ideal{J}}\subseteq \rad{\ideal{I}}$.
\end{proof}

For two ideals $\ideal{I}$ and $\ideal{J}$ in $\kx$ with $\ideal{I}\subseteq \ideal{J}$, by Proposition \ref{prop:quot}, $\rad{\ideal{I}}=\rad{\ideal{J}}\cap \rad{\ideal{I}:\ideal{J}}$. This decomposition is trivial if $\ideal{I}:\ideal{J}= \ideal{I}$. For any given ideal $\ideal{I}\subsetneq \kx$, we will show in the following subsection how to construct an ideal $\ideal{J}\subseteq\kx$ such that $\ideal{I}\subseteq\ideal{J}$ and $\ideal{I}:\ideal{J}\neq\ideal{I}$.
With the ideal $\ideal{J}$ specially constructed from $\ideal{I}$, the splitting of $\ideal{I}$ to $\ideal{J}$ and $\ideal{I}:\ideal{J}$ is nontrivial and
$\rad{\ideal{J}}$ and $\rad{\ideal{I}:\ideal{J}}$ can be further decomposed.
The technique of splitting ideals by taking quotient according to Proposition \ref{prop:quot} originates from the idea of dividing a known subvariety out of any given variety explained in \cite[pp.\,196--197]{W2001E}.

\subsection{Finding saturated ideals of regular sets that divide a given ideal}

\begin{definition}\label{def:srcdivisor}\rm
	An \src~pair $(\pset{G},\pset{C})$ of polynomial sets is called an \emph{\src~divisor} of a polynomial ideal $\ideal{I}$ in $\kx$ if $\bases{\pset{G}}=\sat(\pset{C})$ divides $\ideal{I}$.
\end{definition}

Given an ideal $\ideal{I}\subsetneq \kx$, we want to find an \src~divisor $(\pset{G}, \pset{C})$ of $\ideal{I}$.
Suppose that $\ideal{I}\neq \bases{1}$ and compute an \src~pair  $(\bar{\pset{G}},\bar{\pset{C}}):=\algt(\ideal{I})$ (using Algorithm~\ref{alg:strongpair}) such that $\ideal{I}\subseteq\sat(\bar{\pset{C}})$. If $\ideal{I}:\sat(\bar{\pset{C}})\neq\ideal{I}$, then $(\pset{G}, \pset{C})=(\bar{\pset{G}},\bar{\pset{C}})\neq (\{1\},[1])$ is
an \src~divisor of $\ideal{I}$.

Otherwise, $\ideal{I}:\sat(\bar{\pset{C}})=\ideal{I}$.
 Let $\pset{G}^*$ be the \grobner basis of $\ideal{I}$ and extract the W-characteristic set $\pset{C}^*$ of $\ideal{I}$ from $\pset{G}^*$.
 Recall that $\ideal{I}\subseteq\sat(\pset{C}^*)$ and $\ideal{I}\neq \bases{1}$.
 If $\sat(\pset{C}^*)=\ideal{I}$, then $\sat(\bar{\pset{C}})=\sat(\pset{C}^*)$ according to Theorem~\ref{the:scharpair} (or Fig.\,1), so $\ideal{I}:\sat(\bar{\pset{C}})=\sat(\pset{C}^*):\sat(\pset{C}^*)=
 \bases{1}\neq\ideal{I}$, which contradicts the assumption that $\ideal{I}:\sat(\bar{\pset{C}})=\ideal{I}$.
 Therefore, $\sat(\pset{C}^*)\neq \ideal{I}$ and one can find a polynomial $F\in \sat(\pset{C}^*)$, $F\notin \ideal{I}$. Let $J=\prod_{C\in \pset{C}^*}\ini(C)$. Then there exists a positive integer $q$ such that
 $FJ^q\in\bases{\pset{C}^*}\subseteq\ideal{I}$. Now choose polynomials $H_1, \ldots, H_t\in\kx\setminus\ideal{I}$ such that\footnote{Natural choices for $\{H_1, \ldots, H_t\}$ in line \ref{line:factors} of Algorithm 2 are $\{F,\, \ini(C): C\in\pset{C}^*,\, \ini(C)\not\in\fk\}$ and $\{H\in\kx: H\mid F\prod_{C\in \pset{C}^*} \ini(C), H~\mbox{is irreducible (or squarefree)}\}$. Our experiments are done with the latter choice for squarefree $H$.} $H_1\cdots H_t=0\iff FJ^q=0$.
	\begin{enumerate}
		\item[(1)]
		Let $(\tilde{\pset{G}},\tilde{\pset{C}})\neq (\{1\},[1])$ be an \src~pair computed by Algorithm $\algt$ from $\bases{\ideal{I}\cup \{H_i\}}$ for some $1\leq i\leq t$. If $\ideal{I}:\sat(\tilde{\pset{C}})\neq \ideal{I}$, then $(\pset{G}, \pset{C})=(\tilde{\pset{G}},\tilde{\pset{C}})$ is an \src~divisor of $\ideal{I}$.
		\item[(2)] Otherwise, $\sat(\tilde{\pset{C}})$ does not divide $\ideal{I}$, where $(\tilde{\pset{G}},\tilde{\pset{C}})$ is computed from $\bases{\ideal{I}\cup \{H_i\}}$ as in (1), for all $1\leq i\leq t$.
In this case, proceed further to find an \src~divisor of $\ideal{I}$ from $\bases{\ideal{I}\cup\{H_i,H_{ij}\}}$ similarly, where $H_{ij}\in\kx\setminus\bases{\ideal{I}\cup \{H_i\}}$ such that $H_{i1}\cdots H_{it_i}=0\iff F_i\prod_{C\in \pset{C}^*_i} \ini(C)=0$, $F_i\in \sat(\pset{C}^*_i)\setminus\bases{\ideal{I}\cup \{H_i\}}$, and $\pset{C}^*_i$ is the W-characteristic set of $\bases{\ideal{I}\cup\{H_i\}}$.
 \end{enumerate}

The above process can continue recursively until an \src~divisor of  $\ideal{I}$ is found. We formulate the process as Algorithm \ref{alg:dividesrc}.
\begin{algorithm}[!h]
	\label{alg:dividesrc}
	\KwIn{$\ideal{I}\neq \bases{1}$, an ideal in $\kx$.}
	\KwOut{$(\pset{G},\pset{C})$, an \src~divisor of $\ideal{I}$.}
	
	{
		\BlankLine
		$(\pset{G}, \pset{C}) :=\algt(\ideal{I})$\label{line:src1};\\
		$\bar{\ideal{I}}:=\ideal{I}:\sat(\pset{C})$;\\
		\If{$\bar{\ideal{I}}=\ideal{I}$}
		{
	        $\pset{G}^*:=$ \grobner basis of $\ideal{I}$;\\
	        $\pset{C}^*:=$ W-characteristic set of $\bases{\pset{G}^*}$;\\
			Choose $F\in \sat(\pset{C}^*)\setminus \ideal{I}$ and
			$H_1, \ldots, H_t\in\kx\setminus\ideal{I}$ such that\\ \qquad $H_1\cdots H_t=0\iff F\prod_{C\in \pset{C}^*} \ini(C)=0$\label{line:factors};\\
			\For{$i=1,\ldots,t$}
			{
				$(\pset{G},\pset{C}):=\algt(\bases{\ideal{I}\cup \{H_i\}})$;\label{line:src2}\\
				$\bar{\ideal{I}}:=\ideal{I}:\sat(\pset{C})$;\\
				\If{$\bar{\ideal{I}}\neq\ideal{I}$}
				{
					\Return{$(\pset{G},\pset{C})$;}\label{return1}
				}
			}
			\For{$i=1,\ldots,t$}
			{ $(\pset{G}, \pset{C}):=\algnot(\bases{\ideal{I}\cup \{H_i\}})$;\\
				$\bar{\ideal{I}}:= \ideal{I}:\sat(\pset{C})$;\\
				\If{$\bar{\ideal{I}}\neq\ideal{I}$}
				{\Return{$(\pset{G}, \pset{C})$\label{return2}};}
			}
		}
		
	}\Return{$(\pset{G}, \pset{C})$}\label{return3}
	\caption{$(\pset{G}, \pset{C}):= \algnot(\ideal{I})$ (for finding an \src~divisor of a polynomial ideal)}
\end{algorithm}
\setlength{\textfloatsep}{2pt}

\begin{proof}[Proof of Termination and Correctness of Algorithm \ref{alg:dividesrc}]
	The process of searching for an \src~divisor of
	the ideal $\ideal{I}$ in Algorithm~\ref{alg:dividesrc} can be viewed as of building up a multi-branch tree $\Gamma$ with $\ideal{I}$ associated to its root, $\ideal{I}_{j_1}$ associated to its nodes of the first level,  $\ideal{I}_{j_1j_2}$ associated to its nodes of the second level, etc. In general, the nodes $j_1 \cdots j_l$ of the $l$th level of $\Gamma$ are associated with $\ideal{I}_{j_1 \cdots j_l}$. For each index $j_1 \cdots j_l$ there is an~\src~pair
	$(\pset{G}_{j_1 \cdots j_l},\pset{C}_{j_1 \cdots j_l})$ such that either (1) $\ideal{I}_{j_1 \ldots j_l}=\bases{1}$, or (2) $\ideal{I}:\sat(\pset{C}_{j_1 \cdots j_l})\neq \ideal{I}$, or (3) $\ideal{I}:\sat(\pset{C}_{j_1 \cdots j_l})= \ideal{I}$. In case (1) or (2), the tree $\Gamma$ terminates to grow and the node $j_1 \cdots j_l$ becomes a leaf. When (2) happens, an \src~divisor of $\ideal{I}$ is found. In case (3), new nodes $j_1 \cdots j_li$ of the $(l+1)$th level are generated by some factors $H_{j_1 \cdots j_li}\not\in \ideal{I}_{j_1 \cdots j_l}$ of $F[\prod_{C\in \pset{C}^*_{j_1 \cdots j_l}} \ini(C)]^q\in\bases{\pset{C}^*_{j_1 \cdots j_l}}\subseteq \ideal{I}_{j_1 \cdots j_l}$, where $F\in \sat(\pset{C}^*_{j_1 \cdots j_l})\setminus\ideal{I}_{j_1 \cdots j_l}$, $q$ is a positive integer, and $\pset{C}^*_{j_1 \cdots j_l}$ is the W-characteristic set of $\ideal{I}_{j_1 \cdots j_l}$. The new nodes are associated with $\bases{\ideal{I}_{j_1 \cdots j_l}\cup\{H_{j_1 \cdots j_li}\}}$ ($s_{j_1 \cdots j_l}\leq i\leq t_{j_1 \cdots j_l}$).

(Termination) As $H_{j_1 \cdots j_li}\not\in \ideal{I}_{j_1 \cdots j_l}$ and $\ideal{I}_{j_1 \cdots j_li}=\bases{\ideal{I}_{j_1 \cdots j_l}\cup\{H_{j_1 \cdots j_li}\}}$,
the containment $\ideal{I}\subset \ideal{I}_{j_1}\subset\cdots\subset \ideal{I}_{j_1 \cdots j_l}\subset \ideal{I}_{j_1\cdots j_l i}$ is strict according to the Ascending Chain Condition for polynomial ideals. Hence the length of every branch of the tree $\Gamma$ from the root to a leaf is finite, so the algorithm must terminate.

	(Correctness) Recall Proposition~\ref{prop:quot} and observe that $\prod\nolimits_{i=s_{j_1 \cdots j_l}}^{t_{j_1 \cdots j_l}} H_{j_1 \cdots j_li}\in \sqrt{\ideal{I}_{j_1 \cdots j_l}}$. Let $\Delta$ be the set of indices of all the leaves of $\Gamma$. Then, after the growth of the tree $\Gamma$ from a node is completed,	\[\sqrt{\ideal{I}}=\bigcap_{\delta\in\Delta}\sqrt{\ideal{I}_{\delta}}\] always holds. Therefore, at the termination of the algorithm, either an \src~divisor $(\pset{G},\pset{C})$ is found, or $\bases{1}$ is associated to all the leaves of $\Gamma$; the latter contradicts $\ideal{I}\neq\bases{1}$. The \src~divisor produced in line \ref{return1}, \ref{return2}, or \ref{return3} is what we want to find.
\end{proof}

The following theorem follows directly from Algorithm \ref{alg:dividesrc}.

\begin{theorem}\label{the:srcdivisor}
	From any ideal $\ideal{I}\neq \bases{1}$ in $\kx$, one can construct an \src~divisor of $\ideal{I}$.
\end{theorem}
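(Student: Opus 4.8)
The statement is in essence a corollary of the termination and correctness of Algorithm~\ref{alg:dividesrc} ($\algnot$), so the plan is to argue that running $\algnot$ on any $\ideal{I}\neq\bases{1}$ must halt with an \src~divisor as output. First I would apply Algorithm~\ref{alg:strongpair} to obtain an \src~pair $(\bar{\pset{G}},\bar{\pset{C}})=\algt(\ideal{I})$ with $\ideal{I}\subseteq\sat(\bar{\pset{C}})=\bases{\bar{\pset{G}}}$, which is legitimate by Theorem~\ref{the:scharpair}. Computing the ideal quotient $\ideal{I}:\sat(\bar{\pset{C}})$ and comparing it with $\ideal{I}$ decides at once whether $\sat(\bar{\pset{C}})$ divides $\ideal{I}$; if it does, then $(\bar{\pset{G}},\bar{\pset{C}})$ is the desired \src~divisor and the construction stops.

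The real content lies in the case $\ideal{I}:\sat(\bar{\pset{C}})=\ideal{I}$. Here I would extract the W-characteristic set $\pset{C}^*$ of $\ideal{I}$ and argue, via Theorem~\ref{the:scharpair}, that $\sat(\pset{C}^*)=\ideal{I}$ is impossible: were it to hold, the saturation chain of Figure~\ref{fig:sregularization} would already stabilize at $\pset{C}^*$, forcing $\sat(\bar{\pset{C}})=\sat(\pset{C}^*)=\ideal{I}$ and hence $\ideal{I}:\sat(\bar{\pset{C}})=\ideal{I}:\ideal{I}=\bases{1}\neq\ideal{I}$, a contradiction. Thus $\ideal{I}\subsetneq\sat(\pset{C}^*)$ and there is a witness $F\in\sat(\pset{C}^*)\setminus\ideal{I}$. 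Writing $J=\prod_{C\in\pset{C}^*}\ini(C)$, the definition of saturation yields a positive integer $q$ with $FJ^q\in\bases{\pset{C}^*}\subseteq\ideal{I}$, and factoring $FJ^q$ as $H_1\cdots H_t$ with each $H_i\notin\ideal{I}$ lets me pass to the strictly larger ideals $\bases{\ideal{I}\cup\{H_i\}}$ and recurse on each.

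Next I would organize these recursive calls as a finitely branching tree $\Gamma$ rooted at $\ideal{I}$, each edge augmenting the current ideal by a factor lying outside it. Termination follows from the Ascending Chain Condition: along any branch the associated ideals strictly increase, so every branch has finite length and $\Gamma$ is finite. For correctness I would maintain the invariant $\rad{\ideal{I}}=\bigcap_{\delta}\rad{\ideal{I}_\delta}$ over the current leaves $\delta$, which propagates by Proposition~\ref{prop:quot} together with the observation that the product of the branching factors at each splitting node lies in the radical of the ideal at that node (since $FJ^q$ belongs to it).

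The delicate point, and the one I expect to require the most care, is ruling out vacuous termination: a priori the tree could halt with $\bases{1}$ attached to every leaf, producing no \src~divisor at all. Here the radical invariant is exactly what rescues the argument: if all leaves carried $\bases{1}$, then $\rad{\ideal{I}}=\bigcap_{\delta}\bases{1}=\bases{1}$, whence $\ideal{I}=\bases{1}$, contradicting the hypothesis. Therefore at least one leaf must be reached through the test $\ideal{I}:\sat(\pset{C})\neq\ideal{I}$, and the \src~pair attached to it is the \src~divisor returned in line~\ref{return1}, \ref{return2}, or~\ref{return3}. The residual verifications, namely that each $\algt$ and $\algnot$ call indeed returns an \src~pair and that the factorization in line~\ref{line:factors} always exists, are immediate from Theorem~\ref{the:scharpair} and elementary algebra.
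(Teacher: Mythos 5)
Your proposal is correct and takes essentially the same route as the paper: the paper proves Theorem~\ref{the:srcdivisor} as a direct consequence of the termination and correctness of Algorithm~\ref{alg:dividesrc}, using the same tree $\Gamma$, the Ascending Chain Condition along strictly increasing branches, the radical-intersection invariant $\rad{\ideal{I}}=\bigcap_{\delta\in\Delta}\rad{\ideal{I}_\delta}$ obtained from Proposition~\ref{prop:quot} and the fact that the product of the branching factors lies in the radical, and the same non-vacuity contradiction with $\ideal{I}\neq\bases{1}$. Your preliminary argument that $\ideal{I}:\sat(\bar{\pset{C}})=\ideal{I}$ forces $\sat(\pset{C}^*)\neq\ideal{I}$ likewise reproduces the reasoning the paper gives in the text immediately preceding Algorithm~\ref{alg:dividesrc}.
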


\subsection{An ideal-division-based algorithm for \src~decomposition}
In this subsection, we describe an algorithm to decompose any polynomial set, or the ideal $\ideal{I}$ it generates, into finitely many \src~pairs. Based on the strategy of ideal splitting with quotient, the algorithm works by finding an \src~divisor $(\pset{G}, \pset{C})$ of $\ideal{I}$ and then dividing the saturated ideal of $\pset{C}$ out of $\ideal{I}$ iteratively.

Consider an ideal $\ideal{I}$, initially with input $\pset{F}\subseteq \kx$ as its generating set of polynomials. Let $\Psi$ be the set of \src~pairs already computed; it is $\emptyset$ initially. 
Suppose that $\ideal{I}\neq \bases{1}$ and compute $(\pset{G},\pset{C}):=\algnot(\ideal{I})$;
in this case, $\ideal{I}$ is split into $\sat(\pset{C})$ and $\ideal{I}:\sat(\pset{C})$ and an \src~pair $(\pset{G},\pset{C})\neq (\{1\},[1])$ is obtained and adjoined to $\Psi$. From Proposition~\ref{prop:quot} we know that
	\begin{equation}\label{equ:rad1}
	\rad{\ideal{I}}=\rad{\sat(\pset{C})}\textstyle\,\bigcap \rad{\ideal{I}:\sat(\pset{C})},
	\end{equation}
	so the procedure can continue to decompose  $\ideal{I}:\sat(\pset{C})$ instead of $\ideal{I}$.

The above process of decomposition will terminate in finitely many iterations. We formulate the process as Algorithm~\ref{alg:satdec}. 

\begin{algorithm}[!h]
	\label{alg:satdec}
	\KwIn{$\pset{F}$, a finite, nonempty set of polynomials in $\kx$.}
	\KwOut{$\Psi$, an \src~decomposition of $\pset{F}$ such that $\rad{\bases{\pset{F}}}=\bigcap_{\pset{C} \in\Psi}\rad{\sat(\pset{C})}=\bigcap_{\pset{G}\in \Psi}\rad{\bases{\pset{G}}}$, or the empty set when $\bases{\pset{F}}=\bases{1}$.}
	
	{
		\BlankLine		
		$\Psi := \emptyset$;\\
		$\ideal{I}:=\bases{\pset{F}}$;\\
		\While{$\ideal{I} \neq \bases{1}$\label{line:ideal1}}
		{	
			$(\pset{G},\pset{C}):=\algnot(\ideal{I})$;\\
			$\Psi:=\Psi\cup \{(\pset{G}, \pset{C})\}$\label{line:psi};\\
			$\ideal{I}:=\ideal{I}:\bases{\pset{G}}$\label{line:newp};\\
		}		
		\Return{$\Psi$}
	}\label{line:return2}
	\caption{$\Psi:= \algnorr(\pset{F})$ (for computing an \src~decomposition of a polynomial set)}
\end{algorithm}
\setlength{\textfloatsep}{2pt}

\begin{proof}[Proof of Termination and Correctness of Algorithm \ref{alg:satdec}]
	(Termination) Decomposition in Algo\-ri\-thm \ref{alg:satdec} is an iterative process: every time after the ideal $\ideal{I}$ is decomposed, a new ideal $\bar{\ideal{I}}\supsetneq\ideal{I}$ is obtained and then further decomposed in the same way. Since every new ideal generated in line \ref{line:newp} is strictly enlarged, by the Ascending Chain Condition the algorithm terminates.
	
	(Correctness) When $\bases{\pset{F}}=\bases{1}$ in line \ref{line:ideal1}, $\Psi=\emptyset$ is returned. Hence we only need to show that when $\bases{\pset{F}}\neq\bases{1}$, $\Psi$ is an \src~decomposition of $\pset{F}$, namely all the pairs in $\Psi$ are \src~pairs and the relation \eqref{eq:chardec} holds. It is clear that only in line \ref{line:psi} is adjoined to $\Psi$  a new pair $(\pset{G},\pset{C})$ which is computed by Algorithm \ref{alg:dividesrc}. Therefore, all the pairs in $\Psi$ are \src~pairs.
	
	Now we prove that $\rad{\bases{\pset{F}}}=\bigcap_{\pset{C} \in\Psi}\rad{\sat(\pset{C})}$.
	For those \src~pairs generated in line \ref{line:psi}, one can easily see that $\ideal{I}\subseteq \sat(\pset{C})$; thus from Proposition \ref{prop:quot} we know that the relation \eqref{equ:rad1} holds.
		
	The relation \eqref{equ:rad1} implies that every polynomial $F\in \rad{\ideal{I}}$ is in the intersection of $\rad{\sat(\pset{C})}$ and $\rad{\bar{\ideal{I}}}$, where $\bar{\ideal{I}}$ is a new ideal obtained which remains for further processing. This proves that the relation $\rad{\bases{\pset{F}}}=\bigcap_{\pset{C} \in\Psi}\rad{\sat(\pset{C})}$ holds when the algorithm terminates with $\bar{\ideal{I}}=\bases{1}$. Since $\bases{\pset{G}}=\sat(\pset{C})$ for every \src~pair $(\pset{G},\pset{C})\in \Psi$, $\rad{\bases{\pset{F}}}=\bigcap_{\pset{G}\in \Psi}\rad{\bases{\pset{G}}}$.	This completes the proof of the relation \eqref{eq:chardec}.
\end{proof}

\section{Examples and experiments}\label{sec:exex}
\subsection{Examples for \src~decomposition}
\begin{example}\rm
	Let $\pset{F}=\{uxy, vy^2+y, vx^2+y^2\}\subseteq \mathbb{K}[u,v,x,y]$ with $u<v<x<y$. The \grobner basis of $\ideal{I}_1=\bases{\pset{F}}$ can be easily computed as
	\begin{equation*}
	\begin{aligned}
	\pset{G}_1 = &\{uvx^2, v^4x^4+vx^2, y-v^2x^2\}
	\end{aligned}
	\end{equation*}
	and the W-characteristic set of $\bases{\pset{G}_1}$ is
	$\pset{C}_1=[uvx^2, y-v^2x^2]$.
	The saturated ideal of $\pset{C}_1$ is $\sat(\pset{C}_1)=\bases{x^2,y}$. One sees that 
	$(\bar{\pset{G}}_1,\bar{\pset{C}}_1)$ is an \src~pair, where $\bar{\pset{G}}_1=\{x^2,y\}$ is the \grobner basis and $\bar{\pset{C}}_1=[x^2,y]$ is the W-characteristic set of $\sat(\pset{C}_1)$. It is easy to verify that
	$$\ideal{I}_2=\ideal{I}_1:\sat(\bar{\pset{C}}_1)=\bases{\pset{G}_2}
	$$ is strictly larger than $\ideal{I}_1$, where $\pset{G}_2=\{uv, vP, Q\}$ is the \grobner basis of $\ideal{I}_2$ and $P=v^3x^2+1$, $Q=y-v^2x^2$. Hence $(\bar{\pset{G}}_1,\bar{\pset{C}}_1)$ is adjoined to $\Psi$ and the procedure continues to decompose $\ideal{I}_2$ instead of $\ideal{I}_1$. The W-characteristic set $\pset{C}_2$ of $\ideal{I}_2$ consists of the same polynomials as $\pset{G}_2$:
	$$\quad\pset{C}_2=[C_1, C_2, C_3]=[uv, vP, Q].$$
	Since $\sat(\pset{C}_2)=\bases{1}$, the pair $(\{1\},[1])$ is obtained and $\ideal{I}_2:\bases{1}=\ideal{I}_2$. As $1\in\sat(\pset{C}_2)$, \[\ini(C_2)\cdot\ini(C_1)=v^4u=uC_2-v^3xC_1\in\bases{\pset{C}_2},\] so we can take, e.g., $H_1=v$ and $H_2=u$. Compute the \grobner basis $\pset{G}_3=\{v,y\}$ of $\bases{\pset{G}_2\cup \{H_1\}}$ and extract the W-characteristic set $\pset{C}_3=[v,y]$ of $\bases{\pset{G}_3}$ from $\pset{G}_3$; one finds that $\pset{C}_3$ is regular and $\sat(\pset{C}_3)=\bases{v,y}=\bases{\pset{G}_3}$ and $\ideal{I}_2:\sat(\pset{C}_3)=\bases{u, P, Q}$ is strictly larger than $\ideal{I}_2$. Then the \src~pair $(\pset{G}_3,\pset{C}_3)$ is obtained and adjoined to $\Psi$. The procedure continues with $\ideal{I}_3=\bases{u, P, Q}$.
	
	Simple computation shows that the \grobner basis and the W-characteristic set of $\ideal{I}_3$ contain the same polynomials: $$\pset{G}_4=\{u, P, Q\},\quad \pset{C}_4=[u, P, Q].$$ One can check that $(\pset{G}_4,\pset{C}_4)$ is an \src~pair and $\ideal{I}_3:\sat(\pset{C}_4)=\bases{1}$. Therefore, $(\pset{G}_4, \pset{C}_4)$ is added to $\Psi$ and the procedure terminates. Finally, an \src~decomposition $$\Psi=\{(\bar{\pset{G}}_1,\bar{\pset{C}}_1),(\pset{G}_3,\pset{C}_3),(\pset{G}_4,\pset{C}_4)\}$$ of $\pset{F}$ is obtained.
\end{example}

\begin{example}\rm
	Consider the polynomial set
	$$\pset{F}=\{-ct^2u+t^3-uv^2-uw^2, -ct^2v+t^3-u^2v-vw^2, -ct^2w+t^3-u^2w-v^2w\}$$ (which is Ex 9 in Table \ref{tab:main}) with variable ordering $w<v<u<t<c$. The ideal generated by $\pset{F}$, which is of dimension 2 and not radical, consists of $8$ primary components (none of them is embedded). The polynomial set $\pset{F}$ can be decomposed by Algorithm \ref{alg:satdec} into 6 \src~pairs
$$\begin{array}{l}\smallskip
(\pset{G}_1,\pset{C}_1)=(\{v,u,t\},[v,u,t]),~
(\pset{G}_2,\pset{C}_2)=(\{w,u,t^2\},[w,u,t^2]),~
(\pset{G}_3,\pset{C}_3)=(\{w,v,t^2\},[w,v,t^2]),\\ \smallskip
(\pset{G}_4,\pset{C}_4)=(\{v-w,u-w,wt^2c-t^3+2w^3\},[v-w,u-w,wt^2c-t^3+2w^3]),\\ \smallskip
(\pset{G}_5,\pset{C}_5)=(\{v-w,P,Q,t^2c-wu+w^2\},[v-w,P,Q]),\\
(\pset{G}_6,\pset{C}_6)=(\{G_1,\ldots,G_5\},[G_1,G_2,G_3]),
\end{array}$$
where the polynomials $G_1,\ldots,G_5$ consist of 6, 3, 4, 10, 3 terms respectively and
\[P=t^3-wu^2-w^2u,\quad Q=u^2c+wuc-ut-wt.\]
It may be observed that (1) $\pset{C}_i$ is normal for all $i$, (2) $\rad{\bases{\pset{G}_i}}$ is prime for $i=1,\ldots,5$, and (3) $\rad{\bases{\pset{G}_6}}$ is composed of 3 prime ideals.
\end{example}

The cyclic-$n$ systems are well-known examples for which triangular decompositions based on pseudo-division are more difficult to compute than \grobner bases. Our algorithm can compute \src~pairs for cyclic-6 in a few minutes, while other triangular decomposition algorithms cannot (see Ex 21 in Table \ref{tab:main}).

\subsection{Implementation and experimental results}\label{sec:exp}
We have implemented Algorithm \ref{alg:satdec} as a {\sc Maple} function {\sf srcDec} and carried out experiments with the implementation on an Intel(R) Core(TM) i5-4210U CPU at 1.70 GHz$\times 4$ with 7.7 GB RAM under Ubuntu 16.04 LTS. The implementation is based on the functions for \grobner basis computation available in the {\sf FGb} library and {\sc Maple}'s built-in packages. Selected results of experiments on some test examples are presented in Table~\ref{tab:main}: Ex 1--4 are taken from the Epsilon library, Ex 5--9 from \cite{M2012d}, Ex 10 from \cite{c07c}, Ex 11 from the {\sf FGb} library, and Ex 12--16 can be found at \url{http://www.lifl.fr/~lemaire/BCLM09/BCLM09-systems.txt}, Ex 17--21 at \url{http://homepages.math.uic.edu/~jan/Demo/TITLES.html}, and Ex 22--27 at \url{ http://www-sop.inria.fr/saga/POL/BASE/3.posso}. The function {\sf srcDec} is implemented for direct decomposition of polynomial sets into \src~pairs.
To observe the performance of our algorithm, we made comparative experiments on {\sf srcDec}  in {\sc Maple} 18 with two other relevant functions for unmixed decomposition of polynomial sets into \grobner bases of saturated ideals of triangular sets: one is the {\sf Epsilon} function {\sf uvd} which is implemented for decomposing an arbitrary algebraic variety into unmixed subvarieties. The other function computes first the decomposition of a polynomial set into regular sets using the {\sf RegularChains} function {\sf Triangularize} and then the \grobner bases of the saturated ideals of the computed regular sets.




In Table \ref{tab:main}, Label indicates the label used in the above-cited references and Var, Pol, and Dim denote the number of variables, the number of polynomials in the example, and the dimension of the ideal generated by the polynomials, respectively. Total, GB, SAT, and QUO under Algorithm~\ref{alg:satdec} record respectively the total time (followed by the number of \src~pairs in parentheses) for \src~decomposition using Algorithm \ref{alg:satdec}, the time for computing all the \grobner bases, saturated ideals, and ideal quotients; Total under {\sf uvd} records the total time for unmixed decomposition. Total and Regular
under {\sf Triangularize} record the total time for unmixed decomposition (followed by the number of components in parentheses) and the time for regular decomposition respectively.

\setlength{\dbltextfloatsep}{5pt}
\begin{table*}[h]
	\centering
	\caption{Timings for \src~decomposition (in second)}\label{tab:main}
	{\footnotesize
		\begin{tabular}{cc|c@{\ }c@{\ }c|c@{\  }c@{\  }c@{\  }c|c|c@{\ }c}
			\multicolumn{5}{c}{ } & \multicolumn{4}{c}{Algorithm \ref{alg:satdec}}& \multicolumn{1}{c}{\sf uvd}&\multicolumn{2}{c}{\sf Triangularize}\\
			\hline
			Ex & Label & Var & Pol & Dim   & Total  & GB & SAT & QUO  & Total & Total  & Regular\\
			\hline
			1&E1  &10&10&1   &1.450(2)&0.564&0.221&0.632   &0.867(3)   &2.041(13)&1.709\\
			2&E5  &15&17&4   &3.568(1)&1.793&0.461&1.255   &3.374(1)
			&7.684(7)   &5.225\\
			3&E11  &4&3&1   &0.037(1)&0.008&0.008&0.015   &0.122(1)   &0.079(3)&0.035\\
			4&E34  &14&16&0   &0.574(0)&0.556&0.&0.  &$>1000$ &50.650(0)&50.649 \\
			5&S1  &4&3&2    &0.055(2)&0.017&0.007&0.022   &0.011(3)    &0.054(3)     &0.034   \\
			6&S6  & 4 & 3  &2 &0.175(3)  &0.035&0.018&0.111     &0.032(3)    &0.080(3)    &0.052  \\
			7&S7  & 4 & 3  &1 &0.098(1)  &0.008 &0.052&0.031      &0.652(1)   &0.127(1)    &0.084  \\
			8&S8 & 4 & 3  &2 &0.058(2)  &0.018 &0.004&0.030     &0.112(2)    &0.099(2)  &0.081  \\
			9&S14  & 5 & 3  &2 & 1.024(6)  &0.372&0.299 &0.235     &3.767(6)    &0.184(8)  &0.109  \\
			10&nueral   &4&3&1  &1.225(2)&0.445&0.187&0.574     &$>1000$    & 0.222(5) &0.115        \\
			11&F663    & 10 & 9 & 2 &8.982(1)&1.264  &1.684&5.983 &$>1000$
			&4.312(4)  &0.935  \\
			12&Katsura-4&5&5&0 &0.353(1)&0.254&0.&0.086&209.831(1)&6.583(4)&5.343\\
			13&nld-3-4&4&4&0 &2.037(8)&0.175&0.624&1.121&1.066(8)&0.754(29)&0.520\\
			14&nld-8-3&3&3&0&2.464(2)&2.254&0.207&0.&3.334(2)&17.360(8)&0.268\\       15&nql-5-4&5&5&0&2.908(1)&2.088&0.&0.806&25.422(1)&35.508(1)&30.335\\  16&caprasse-li&4&4&0&104.055(2)&0.066&101.836&2.145&2.883(3)&0.967(7)&0.634\\
			17&Cyclic-5  & 5  & 5  &0 &0.497(4)&0.101&0.058&0.285  &$>1000$  &1.079(15) &0.798  \\
			18&geneig &6&6&0&0.566(1)&0.135&0.003&0.417&$>1000$&--&$>1000$\\
			19&reimer5&5&5&0&32.850(1)&30.983&0.005&1.820&$>1000$&--&$>1000$\\		
			20&redcyc6&6&6&0&36.920(6)&0.203&0.556&35.079&$>1000$&--&$>4000$\\
			21&Cyclic-6  & 6  & 6  &0 &368.988(6)&2.455&332.158&33.780  &$>1000$  &-- &$>1000$  \\
			22&Bronstein2& 4&3&2& 0.377(5)&0.188& 0.087&0.&150.684(4)&0.430(5)&0.387\\
			23&Morgenstern& 9&5&4& 0.100(1)&0.028& 0.014&0.042&2.812(1)&0.223(1)&0.191\\
			24&Raksanyi& 8&4&4& 0.057(1)&0.012& 0.001&0.028&0.588(1)&0.064(2)&0.047\\
			25&Gerdt 5& 13&15&2& 3.150(4)&1.163& 0.330&1.343&2.830(1)&2.009(3)&1.849\\
			26&Czapor Geddes 3& 11&2&9& 1098.118(1)&0.597& 1097.444&0.095&$>4000$&1.261(1)&0.076\\				
			\hline
		\end{tabular}
	}
\end{table*}

The time-consuming steps in algorithm $\algnorr$ are for the computation of the \grobner bases of the input ideals, the \grobner bases of ideal quotients, and the \grobner bases of saturated ideals. The regular sets computed by the algorithm are normal in most cases and they are often simpler than the corresponding regular sets computed by pseudo-division or subresultant-based algorithms, so that the \grobner bases of the saturated ideals of the regular sets produced by our algorithm {\sf srcDec} tend to be easier to compute than those produced by {\sf uvd} or {\sf Triangularize} for about two thirds of the test examples. More importantly, the number of \src~pairs in an \src~decomposition computed by {\sf srcDec} is usually smaller than the number of components in the corresponding unmixed decomposition computed by {\sf uvd} or regular decomposition by {\sf Triangularize}, as shown by the experimental data in Table \ref{tab:main}. This is because our ideal-division-based algorithm generates few redundant components. For example, the computation of the \src~decomposition using {\sf srcDec} takes much less time than that of the regular decomposition using {\sf Triangularize} for Ex 14, as the \src~decomposition contains only two \src~pairs.

\section{Conclusion}\label{sec:con}

In this paper, it is shown that a strong regular characteristic (\src) pair, and therefore an \src~divisor $(\pset{G},\pset{C})$ of $\ideal{I}$, can be constructed from any polynomial ideal $\ideal{I}$ with given generating set $\pset{F}$. The constructed \src~divisor may be used to split the ideal $\ideal{I}$ into the saturated ideal $\sat(\pset{C})$ and the quotient ideal $\ideal{I}:\sat(\pset{C})$. The process of construction and splitting can be repeated for $\ideal{I}:\sat(\pset{C})$ instead of $\ideal{I}$ and recursively, yielding an algorithm capable of decomposing the polynomial set $\pset{F}$ into finitely many \src~pairs $(\pset{G}_1,\pset{C}_1),\ldots,(\pset{G}_e,\pset{C}_e)$ such that
\begin{equation} \label{eq:chardeceq}
\zero(\pset{F}) = \bigcup_{i=1}^e\zero(\pset{G}_i) =  \bigcup_{i=1}^e\zero(\sat(\pset{C}_i))
\end{equation}
or equivalently \eqref{eq:chardec} holds. The relation \eqref{eq:chardeceq} provides two representations for the zero set of $\pset{F}$: one in terms of the
\grobner bases $\pset{G}_1,\ldots,\pset{G}_e$ and the other in terms of the regular sets $\pset{C}_1,\ldots,\pset{C}_e$. Several nice properties about strong regular characteristic pairs and characteristic decompositions have been presented, and the implementation and performance of our proposed algorithm have been illustrated by examples and experimental results.

The contributions of this paper include: (1) two main theorems (Theorems \ref{the:scharpair} and \ref{the:srcdivisor})
 showing
 how to construct an \src~pair and an \src~divisor of an arbitrary polynomial ideal; (2) an algorithm for decomposing any polynomial set $\pset{F}$ into \src~pairs $(\pset{G}_i,\pset{C}_i)$ such that \eqref{eq:chardec} and \eqref{eq:chardeceq} hold; (3) some experiments with a preliminary implementation of the decomposition algorithm.

The triangular sets in an \src~decomposition are \emph{normal} in most cases (cf.\ \cite{WDM20, w2016o}). It turns out that comprehensive triangular decompositions \cite{c07c} and/or \grobner systems \cite{MON02} can be reproduced rather easily from \src~pairs computed by our algorithm, and we are working on the details. The W-characteristic set of an ideal may be morbid and it is not yet clear when morbidity happens. How to establish equivalent conditions for a W-characteristic set to be morbid and how to retrieve information of an ideal from its morbid W-characteristic set are some of the questions that remain for further investigation.

\medskip

The authors wish to thank the referees for their insightful comments which helped bring the paper to the present form. This work also benefited from frequent discussions the authors had with Chenqi Mou.


	\bibliographystyle{abbrv}
	\bibliography{strongdec}

\appendix
\section{ }
\subsection*{A.1. Strong regular characteristic decomposition of radical ideals}

%

	\begin{lemma}\label{lem:common}
		Let $\ideal{I}$ and $\ideal{J}$ be two radical ideals in $\kx$. Then the algebraic varieties $\zero(\ideal{I}:\ideal{J})$ and $\zero(\ideal{J})$ do not have any common irredundant irreducible component.
	\end{lemma}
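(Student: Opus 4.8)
The plan is to translate the algebraic statement into geometry via the standard description of the ideal quotient, and then to compare irreducible decompositions. Since all varieties live in $\bar{\fk}^{\,n}$ over the algebraically closed field $\bar{\fk}$ and $\ideal{I}$ is radical, the first step is to invoke the classical quotient--closure identity (see \cite[Ch.~4]{CLO1997I})
\[
\zero(\ideal{I}:\ideal{J}) = \overline{\zero(\ideal{I})\setminus\zero(\ideal{J})},
\]
where the bar denotes Zariski closure. Note that only the radicality of $\ideal{I}$ (and not that of $\ideal{J}$) is actually needed here.

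Next I would fix the irredundant irreducible decomposition $\zero(\ideal{I})=V_1\cup\cdots\cup V_s$ and examine each difference $V_i\setminus\zero(\ideal{J})$. The decisive point is that each $V_i$ is irreducible: if $V_i\subseteq\zero(\ideal{J})$ then $V_i\setminus\zero(\ideal{J})=\emptyset$, whereas if $V_i\not\subseteq\zero(\ideal{J})$ then $V_i\cap\zero(\ideal{J})$ is a proper closed subset of the irreducible set $V_i$, so $V_i\setminus\zero(\ideal{J})$ is a nonempty, hence dense, open subset of $V_i$ and $\overline{V_i\setminus\zero(\ideal{J})}=V_i$. Taking the closure of the finite union then yields
\[
\zero(\ideal{I}:\ideal{J}) = \bigcup_{\,V_i\not\subseteq\zero(\ideal{J})} V_i .
\]
Because any subfamily of an irredundant family of irreducible varieties is again irredundant, this is precisely the irredundant irreducible decomposition of $\zero(\ideal{I}:\ideal{J})$. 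In particular, every irreducible component $Z$ of $\zero(\ideal{I}:\ideal{J})$ equals some $V_i$ with $V_i\not\subseteq\zero(\ideal{J})$, and therefore $Z\not\subseteq\zero(\ideal{J})$.

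Finally I would argue by contradiction: if some irreducible variety $Z$ were a common irredundant irreducible component of both $\zero(\ideal{I}:\ideal{J})$ and $\zero(\ideal{J})$, then being a component of $\zero(\ideal{J})$ would force $Z\subseteq\zero(\ideal{J})$, contradicting the conclusion $Z\not\subseteq\zero(\ideal{J})$ obtained above. Hence no common component can exist.

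I expect the main obstacle to be the careful bookkeeping in the middle step: justifying that passing to the closure commutes with the finite union and, more importantly, that the surviving $V_i$ still constitute an \emph{irredundant} decomposition, so that the true irreducible components of $\zero(\ideal{I}:\ideal{J})$ are exactly these $V_i$ rather than some strictly smaller subvarieties. The density of $V_i\setminus\zero(\ideal{J})$ in $V_i$ for irreducible $V_i$ is the crucial ingredient, and it is exactly here that the irreducibility of the $V_i$ and the radicality of $\ideal{I}$ (through the quotient--closure identity) are genuinely used.
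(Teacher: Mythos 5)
Your proof is correct, and it takes a recognizably different route from the paper's, although both hinge on the same classical identity $\zero(\ideal{I}:\ideal{J})=\overline{\zero(\ideal{I})\setminus\zero(\ideal{J})}$ for radical $\ideal{I}$. The paper decomposes the quotient variety itself: writing $\zero(\ideal{I}:\ideal{J})=V_1\cup\cdots\cup V_s$ irredundantly, it supposes a common component $V_m$, observes that $\zero(\ideal{I})\setminus\zero(\ideal{J})$ avoids $V_m$ (since $V_m\subseteq\zero(\ideal{J})$) and hence lies in the union $V^*$ of the remaining components, and takes closures to get $\zero(\ideal{I}:\ideal{J})\subseteq V^*\subsetneq\zero(\ideal{I}:\ideal{J})$, a contradiction with irredundancy; no density argument is needed. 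You instead decompose $\zero(\ideal{I})$ and prove a strictly stronger structural fact: the irreducible components of $\zero(\ideal{I}:\ideal{J})$ are exactly those components of $\zero(\ideal{I})$ not contained in $\zero(\ideal{J})$, using density of nonempty open subsets of irreducible sets; the lemma then follows because any component of $\zero(\ideal{J})$ is contained in $\zero(\ideal{J})$. What each buys: the paper's argument is the shorter, minimal contradiction; yours yields more information (no component of $\zero(\ideal{I}:\ideal{J})$ is even \emph{contained} in $\zero(\ideal{J})$, together with a complete description of those components), which is the kind of statement one could reuse elsewhere. Your side remark that only the radicality of $\ideal{I}$ is needed is also correct and is consistent with the paper's own counterexample in the Appendix, where the lemma fails for $\ideal{I}=\bases{x^3y}$ (nonradical) and $\ideal{J}=\bases{x}$ (radical). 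One small point worth making explicit in either proof: since the varieties live in $\bar{\fk}^{\,n}$ while the ideals live in $\kx$, the quotient--closure identity implicitly uses that a radical ideal of $\kx$ stays radical after extension to $\bar{\fk}[\p{x}]$, which holds here because $\fk$ has characteristic $0$.
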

	\begin{proof}
		Let $\zero(\ideal{I}:\ideal{J})=V_1\cup \cdots \cup V_s$ be an irredundant irreducible decomposition of the algebraic variety $\zero(\ideal{I}:\ideal{J})$.  Suppose that $\zero(\ideal{I}:\ideal{J})$ and $\zero(\ideal{J})$ have a common irredundant irreducible component, say $V_m$ for some positive integer $m$ ($\leq s$), and let
 $$V^*=\bigcup_{\scriptsize\begin{array}{c} 1\leq i\leq s \\  i\neq m\end{array}}V_i.$$
 Then $\zero(\ideal{I})\setminus\zero(\ideal{J})\subseteq V^*\subsetneq V_1\cup \cdots \cup V_s=\zero(\ideal{I}:\ideal{J})$; in this case, $\zero(\ideal{I})\setminus\zero(\ideal{J})\subset V^*$ implies that $\overline{\zero(\ideal{I})\setminus\zero(\ideal{J})}\subseteq \overline{V^*}=V^*$, which contradicts with the fact that $\zero(\ideal{I}:\ideal{J})=\overline{\zero(\ideal{I})\setminus\zero(\ideal{J})}$.
	\end{proof}

\begin{proposition}\label{prop:radical}
	Let $\ideal{I}$ and $\ideal{J}$ be two radical ideals in $\kx$. Then:
	\begin{enumerate}
		\item[$(a)$]  the saturated ideal of the W-characteristic of $\ideal{I}$ is radical;
		\item[$(b)$] the ideal quotient $\ideal{I}:\ideal{J}$ is also radical.
	\end{enumerate}
\end{proposition}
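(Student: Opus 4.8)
The plan is to prove $(b)$ first and then derive $(a)$ from it, since both rest on the same elementary observation that a colon ideal of a radical ideal is again radical.

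For $(b)$, I would take any $P\in\kx$ with $P^k\in\ideal{I}:\ideal{J}$ for some $k\geq 1$ and show $P\in\ideal{I}:\ideal{J}$. By definition $P^kH\in\ideal{I}$ for every $H\in\ideal{J}$. Fixing such an $H$, I would write $(PH)^k=(P^kH)\,H^{k-1}\in\ideal{I}$, so that radicality of $\ideal{I}$ forces $PH\in\ideal{I}$. Since $H\in\ideal{J}$ was arbitrary, $P\in\ideal{I}:\ideal{J}$, whence $\ideal{I}:\ideal{J}$ is radical. Note this argument uses only that $\ideal{I}$ is radical; the hypothesis that $\ideal{J}$ is radical is not needed for $(b)$.

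For $(a)$, let $\pset{C}$ be the W-characteristic set of $\ideal{I}$ and put $J=\prod_{C\in\pset{C}}\ini(C)$, so that by definition $\sat(\pset{C})=\bigcup_{i\geq 1}\bigl(\bases{\pset{C}}:\bases{J^i}\bigr)$ is the saturation of $\bases{\pset{C}}$ by $J$. The key intermediate step is to show that this saturation may be computed from $\ideal{I}$ itself rather than from $\bases{\pset{C}}$, i.e.\ $\sat(\pset{C})=\bigcup_{i\geq 1}\bigl(\ideal{I}:\bases{J^i}\bigr)$. This follows from the sandwich $\bases{\pset{C}}\subseteq\ideal{I}\subseteq\sat(\pset{C})$ of Proposition~\ref{prop:zero}$(b)$ together with the idempotence of saturation: applying $\,\cdot:\bases{J}^{\infty}$ to the two inclusions and using that $\sat(\pset{C})$ is already $J$-saturated pins both ends to the same ideal. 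Once this identification is made, $\sat(\pset{C})$ is exhibited as an ascending union of the colon ideals $\ideal{I}:\bases{J^i}$, each of which is radical by part $(b)$. An ascending union of radical ideals is radical (directly: if $P^k$ lies in the union it lies in some radical member, hence so does $P$; or by invoking the Ascending Chain Condition so that the chain stabilizes), so $\sat(\pset{C})$ is radical.

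I expect the only nonroutine point to be the identification $\sat(\pset{C})=\bigcup_{i}\bigl(\ideal{I}:\bases{J^i}\bigr)$: one must pass from the saturation of $\bases{\pset{C}}$, which is how $\sat(\pset{C})$ is defined, to the saturation of the larger ideal $\ideal{I}$, and this is exactly where Proposition~\ref{prop:zero}$(b)$ is indispensable. Everything surrounding it — the $(PH)^k$ trick for radicality and the passage to the limit of the ascending chain — is standard commutative algebra.
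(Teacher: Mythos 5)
Your proof is correct. For part (b) you give exactly the paper's argument: the same $(PH)^k=(P^kH)H^{k-1}$ computation, and the paper's proof likewise uses only the radicality of $\ideal{I}$. For part (a), however, you take a genuinely different route. The paper argues directly with pseudo-division: given $P\in\rad{\sat(\pset{C})}$, it writes $P^lJ^m\in\bases{\pset{C}}\subseteq\ideal{I}$ with $J=\prod_{C\in\pset{C}}\ini(C)$, deduces $(JP)^{\max(m,l)}\in\ideal{I}$ and hence $JP\in\rad{\ideal{I}}=\ideal{I}$, and then returns to the saturated ideal by invoking Proposition \ref{prop:zero}(a), i.e.\ $\prem(JP,\pset{C})=0$, which yields $P\in\sat(\pset{C})$. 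You instead reduce (a) to (b) via the identity $\sat(\pset{C})=\bigcup_{i\geq 1}\bigl(\ideal{I}:\bases{J^i}\bigr)$, and your justification of that identity --- monotonicity of saturation applied to the sandwich $\bases{\pset{C}}\subseteq\ideal{I}\subseteq\sat(\pset{C})$ of Proposition \ref{prop:zero}(b), together with idempotence of saturation --- is complete and correct; so is the closing step, since the ideals $\ideal{I}:\bases{J^i}$ form an ascending chain, each radical by (b) (which, as you rightly note, needs no hypothesis on the second ideal, so the typically non-radical ideals $\bases{J^i}$ cause no trouble). Your route buys three things: it avoids pseudo-division entirely (Proposition \ref{prop:zero}(a) is never needed), it unifies (a) and (b) as instances of the single fact that colon ideals of a radical ideal are radical, and it isolates a by-product of independent interest, namely that the saturated ideal of the W-characteristic set of $\ideal{I}$ equals the saturation of $\ideal{I}$ itself by the product of the initials, a statement that holds for any ideal sandwiched as in Proposition \ref{prop:zero}(b), radical or not. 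What the paper's proof buys is brevity at the element level and consistency with the triangular-set toolkit ($\prem$ and $\sat$) used throughout the rest of the paper.
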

\begin{proof}
	$(a)$ Let $\pset{T}$ be the W-characteristic set of $\ideal{I}$. To show that $\sat(\pset{T})$ is radical, it suffers to show that $\sat(\pset{T})=\rad{\sat(\pset{T})}$. The inclusion $\sat(\pset{T})\subseteq \rad{\sat(\pset{T})}$ is obvious. For any $P\in \rad{\sat(\pset{T})}$, we have $P^lI^m\in \bases{\pset{T}}\subseteq \ideal{I}$ for some nonnegative integers $m$ and $l$, where $I=\prod_{T\in\pset{T}}\ini(T)$. Thus ${(IP)}^{\max(m,l)}\in \ideal{I}$; it follows that $IP\in \rad{\ideal{I}}=\ideal{I}$, so that $\prem(IP, \pset{T})=0$, which means that $P\in \sat(\pset{T})$. Therefore, $\sat(\pset{T})=\rad{\sat(\pset{T})}$.
	
	$(b)$ To prove that $\ideal{I}:\ideal{J}$ is radical, we only need to prove that $\rad{\ideal{I}:\ideal{J}}\subseteq \ideal{I}:\ideal{J}$. For any $F\in \rad{\ideal{I}:\ideal{J}}$ and $J\in \ideal{J}$, there exists a nonnegative integer $l$ such that $F^lJ\in \ideal{I}$; it follows that ${(FJ)}^l\in \ideal{I}$, so that $FJ\in \rad{\ideal{I}}=\ideal{I}$. Hence $F\in \ideal{I}:\ideal{J}$.
\end{proof}


By Proposition \ref{prop:radical}, for any radical ideal $\ideal{I}$ the saturated ideal $\sat(\pset{C})$ of every regular set $\pset{C}$ appearing in an \src~decomposition $\Psi$ of $\ideal{I}$ is radical. According to Lemma \ref{lem:common}, for any two \src~pairs $(\pset{G}_i,\pset{C}_i)$ and $(\pset{G}_j,\pset{C}_j)$ ($i\neq j$) in $\Psi$ the two varieties $\zero(\sat(\pset{C}_i))$ and $\zero(\sat(\pset{C}_j))$ do not have any common irredundant irreducible component. In other words, the ideals $\sat(\pset{C}_i)$ and $\sat(\pset{C}_j)$ have no common minimal associated prime. Therefore, the number of \src~pairs in $\Psi$ must be smaller than that of irredundant irreducible components of the variety $\zero(\pset{F})$. We are unable to establish such a tight bound for nonradical ideals because when $\ideal{I}$ and $\ideal{J}$ in Lemma~\ref{lem:common} are not radical, $\ideal{I}:\ideal{J}$ and $\ideal{J}$ may have a common minimal associated prime. Taking $\ideal{I}=\bases{x^3 y}$ and $\ideal{J}=\bases{x}$ as an example, we have $\ideal{I}:\ideal{J}=\bases{x^2 y}$, which has a minimal associated prime $\bases{x}$ as $\ideal{J}$.

\subsection*{A.2. Triangular decomposition and characteristic decomposition}
Let $\ideal{I}\subset \kx$ be a polynomial ideal with generating set $\pset{F}$ and $\Psi=\{(\pset{G}_1,\pset{C}_1),\ldots,(\pset{G}_e,\pset{C}_e)\}$ be an \src~decomposition of $\pset{F}$ computed by Algorithm \ref{alg:satdec}.
Then $\{\pset{C}_1,\ldots,\pset{C}_e\}$ is a regular triangular decomposition of $\pset{F}$ obtained from $\Psi$ with no further computation. Therefore, the decomposition $\Psi$ provides two representations for $\zero(\pset{F})$: one in terms of the strong regular \grobner bases $\pset{G}_1,\ldots,\pset{G}_e$, and the other in terms of the regular triangular sets $\pset{C}_1,\ldots,\pset{C}_e$.

\begin{theorem}
	Let $\{\pset{T}_1,\ldots, \pset{T}_e\}$ be a regular triangular decomposition of $\pset{F}\subseteq \kx$ and each $(\pset{G}_i,\pset{C}_i)$ be constructed from $\pset{T}_i$ according to Theorem 4.4 in \cite{MW19}. Assume that the variable ordering condition is satisfied for all $\pset{T}_i$. Then $\{(\pset{G}_1,\pset{C}_1),\ldots, (\pset{G}_e,\pset{C}_e)\}$ is a strong normal characteristic decomposition of $\pset{F}$.
\end{theorem}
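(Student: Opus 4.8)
The plan is to argue componentwise: from each regular set $\pset{T}_i$ I extract a strong normal characteristic pair whose saturated ideal coincides with $\sat(\pset{T}_i)$, and then read off the global radical relation directly from the definition of a regular triangular decomposition. The only global fact I need is that, since $\{\pset{T}_1,\ldots,\pset{T}_e\}$ is a regular triangular decomposition of $\pset{F}$, relation \eqref{eq:kalk} gives
\[
\rad{\bases{\pset{F}}}=\rad{\sat(\pset{T}_1)}\cap\cdots\cap\rad{\sat(\pset{T}_e)}.
\]
Everything else is handled one index at a time.

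Fix $i$. By the construction of Theorem 4.4 in \cite{MW19}, $\pset{G}_i$ is the reduced lex \grobner basis of $\sat(\pset{T}_i)$ and $\pset{C}_i$ is the W-characteristic set of $\bases{\pset{G}_i}=\sat(\pset{T}_i)$, so $(\pset{G}_i,\pset{C}_i)$ is a characteristic pair in the sense of Definition \ref{def:charpair}. Since $\pset{T}_i$ is itself regular, it witnesses that $\sat(\pset{T}_i)$ is equiprojectable, and since $\pset{T}_i$ satisfies the variable ordering condition, Corollary \ref{prop:prop} applies: it yields that $\pset{C}_i$ is regular and that $\sat(\pset{C}_i)=\sat(\pset{T}_i)=\bases{\pset{G}_i}$. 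The equality $\sat(\pset{C}_i)=\bases{\pset{G}_i}$ is exactly the statement that the pair $(\pset{G}_i,\pset{C}_i)$ is strong.

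It then remains to upgrade the regularity of $\pset{C}_i$ to normality, and this is the step I expect to require the most care, because the input $\pset{T}_i$ is merely regular rather than normal. The idea is to invoke Theorem \ref{thm:divisible}: were $\pset{C}_i$ not normal, then, provided $\pset{C}_i$ satisfies the variable ordering condition, the theorem would produce a prefix $[C_1,\ldots,C_{k+1}]$ of $\pset{C}_i$ that is not regular, contradicting the regularity of $\pset{C}_i$ (every prefix of a regular chain is regular by definition). To make this argument go through I must first verify the hypothesis that $\pset{C}_i$ satisfies the variable ordering condition; this is the genuinely delicate point. It should follow from the observation that $\pset{T}_i$ and $\pset{C}_i$ are triangular sets with the same saturated ideal and hence the same set of leading variables and the same parameters, so that the parameters of $\pset{C}_i$, being those of $\pset{T}_i$, are the smallest variables. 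Granting this, Theorem \ref{thm:divisible} forces $\pset{C}_i$ to be normal, and therefore each $(\pset{G}_i,\pset{C}_i)$ is a strong normal characteristic pair.

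Finally I would assemble the decomposition. Substituting $\sat(\pset{T}_i)=\sat(\pset{C}_i)=\bases{\pset{G}_i}$ into the displayed radical relation gives
\[
\rad{\bases{\pset{F}}}=\rad{\sat(\pset{C}_1)}\cap\cdots\cap\rad{\sat(\pset{C}_e)}=\rad{\bases{\pset{G}_1}}\cap\cdots\cap\rad{\bases{\pset{G}_e}},
\]
which is precisely \eqref{eq:chardec}. Combined with the fact that every $(\pset{G}_i,\pset{C}_i)$ is a strong normal characteristic pair, this shows that $\{(\pset{G}_1,\pset{C}_1),\ldots,(\pset{G}_e,\pset{C}_e)\}$ is a strong normal characteristic decomposition of $\pset{F}$, as claimed.
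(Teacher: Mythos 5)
Your overall plan is sound and most of it is correct: the radical relation \eqref{eq:kalk}, the remark that $\pset{T}_i$ itself witnesses equiprojectability of $\sat(\pset{T}_i)$, the application of Corollary \ref{prop:prop} to conclude that $\pset{C}_i$ is regular with $\sat(\pset{C}_i)=\sat(\pset{T}_i)=\bases{\pset{G}_i}$, and the final chaining of equalities are all valid. The gap is exactly at the point you flag as delicate, and it is a genuine one. To invoke Theorem \ref{thm:divisible} you must first know that $\pset{C}_i$ satisfies the variable ordering condition, and you justify this by asserting that triangular sets with the same saturated ideal have the same leading variables and parameters. That is not an observation: for arbitrary triangular sets it is unsupported, and what you actually need is the invariance of leading variables among \emph{regular} chains sharing a saturated ideal. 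This statement is true, but it is a substantive fact whose proof needs properties of regular chains that appear nowhere in the paper or in your proposal --- for instance that no nonzero element of $\sat(\pset{T}_i)$ lies in $\fk[U]$ for the parameter set $U$ (which gives $\lv(\pset{C}_i)\subseteq\lv(\pset{T}_i)$ under the variable ordering condition), and that $\ini(T_j)\notin\sat(\pset{T}_i)$ for each $T_j$, combined with the lex elimination property of the reduced \grobner basis (which gives the reverse inclusion). Without such an argument, the normality of $\pset{C}_i$ --- the very content of the word ``normal'' in the statement --- is not established; Example \ref{ex:counterexample} shows that hypotheses of this kind cannot be waved through.

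The paper's own proof avoids the issue entirely: since the pairs $(\pset{G}_i,\pset{C}_i)$ are by hypothesis constructed ``according to Theorem 4.4 in \cite{MW19}'', the paper simply invokes that theorem, whose conclusion for a regular $\pset{T}_i$ satisfying the variable ordering condition is precisely that $\pset{C}_i$ is normal and $\sat(\pset{C}_i)=\bases{\pset{G}_i}$; normality comes for free, and the rest is the same substitution of equalities you perform. So the simplest repair is to cite Theorem 4.4 of \cite{MW19} directly for both strongness and normality, which also makes the detour through Corollary \ref{prop:prop} and Theorem \ref{thm:divisible} unnecessary. Alternatively, if you want to keep your route --- which is genuinely different and attractive, since it derives normality from results stated inside this paper --- you must insert and prove the lemma that for a regular chain $\pset{T}$ satisfying the variable ordering condition, the W-characteristic set $\pset{C}$ of $\sat(\pset{T})$ satisfies $\lv(\pset{C})=\lv(\pset{T})$; only with that lemma in hand does your contradiction argument via Theorem \ref{thm:divisible} go through.
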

\begin{proof}
	As $\{\pset{T}_1,\ldots, \pset{T}_e\}$ is a triangular decomposition of $\pset{F}$, we have $\rad{\bases{\pset{F}}}=\bigcap_{i=1}^e \rad{\sat(\pset{T}_i)}$. By Theorem 4.4 in \cite{MW19} the W-characteristic set $\pset{C}_i$ of $\sat(\pset{T}_i)$ is normal and $\sat(\pset{C}_i)=\bases{\pset{G}_i}$, so each $(\pset{G}_i,\pset{C}_i)$ is a strong normal characteristic pair for $i=1,\ldots,e$. Using the equalities $\sat(\pset{T}_i)=\sat(\pset{C}_i)$ and $\bases{\pset{G}_i}=\sat(\pset{T}_i)$, one can easily prove that
	$$\rad{\bases{\pset{F}}}=\bigcap_{i=1}^e \rad{\sat(\pset{T}_i)}=\bigcap_{i=1}^e \rad{\sat(\pset{C}_i)}=\bigcap_{i=1}^e \rad{\bases{\pset{G}_i}}.$$
	Therefore, $\{(\pset{G}_1,\pset{C}_1),\ldots, (\pset{G}_e,\pset{C}_e)\}$ is a strong normal characteristic decomposition of $\pset{F}$.
\end{proof}

More generally, for any decomposition of an ideal $\ideal{I}=\bases{\pset{F}}\subseteq \kx$ into $e$ ideals $\ideal{I}_1,\ldots,\ideal{I}_e$ such that $\rad{\ideal{I}}=\bigcap_{i=1}^{e}\rad{\ideal{I}_i}$, one can apply Algorithm \ref{alg:satdec} to each generating set of $\ideal{I}_i$ for $i=1,\ldots,e$
and then combine the results for all $i$ to yield a strong regular characteristic decomposition of $\pset{F}$, from which a regular triangular decomposition of $\pset{F}$ is obtained as by-product.

\end{document}